\newcommand{\symb}[1]{\textit{#1}} 
\newcommand{\noop}{\symb{Noop}}
\newcommand{\Push}{\symb{Push}}
\newcommand{\Pull}{\symb{Pull}}
\newcommand{\while}{\symb{While}}
\newcommand{\cond}{\symb{cond}}
\DeclareMathOperator{\vars}{vars}
\newcommand{\isvalid}{\symb{isValid}}
\newcommand{\isremvalid}{\symb{remIsValid}}
\newcommand{\rem}[1]{\symb{rem}(#1)}
\newcommand{\IF}[3]{\symb{if}\,(#1)~#2~\symb{else}~#3 }
\newcommand{\feval}[2]{\llbracket#1\rrbracket_{#2}}
\newcommand{\True}{{\tt True}}						
\newcommand{\False}{{\tt False}}				
\newcommand{\transl}[1]{\llbracket#1\rrbracket}
\newtheorem{definition}{Definition}
\newtheorem{Property}{Property}
\newtheorem{Theorem}{Theorem}
\newcommand{\abs}[1]{#1^\#}
\newcommand{\AM}{\mathcal{M}}
\newcommand{\Prog}{\mathcal{P}}
\DeclareMathOperator{\range}{index}
\DeclareMathOperator{\dom}{dom}
\newcommand{\Overlap}[1]{O(#1)}
\newenvironment{example}{\smallskip\par\noindent\emph{Example:}}{\medskip}
\journal{Journal of Logical and Algebraic Methods in Programming}
\begin{document}

\begin{frontmatter}


\title{Leveraging Access Mode 
Declarations in a Model for Memory Consistency in Heterogeneous Systems }

\author[lip]{Ludovic Henrio\fnref{cor2}\corref{cor1}}
\ead{ludovic.henrio@cnrs.fr}
\author[liu]{Christoph Kessler}
\ead{christoph.kessler@liu.se}
\author[liu]{Lu Li}
\ead{lu.li@liu.se}

\fntext[cor2]{This work has been mostly founded and realised while Ludovic Henrio was affiliated to: Universit\'e~C\^ote~d'Azur, CNRS, I3S, France.}
\cortext[cor1]{Corresponding author}
\address[lip]{Univ Lyon, CNRS, ENS de Lyon, Inria, Universit\'e Claude Bernard Lyon 1, LIP, Lyon, France.}
\address[liu]{University of Linköping, Sweden}

\begin{abstract}
On a system that exposes disjoint memory spaces to the software, a program has to address memory consistency 
issues and perform data transfers so that it always accesses valid  
data. Several approaches exist to ensure the consistency of the memory accessed. Here we 
are interested in the verification of a declarative approach where each component of 
a computation is annotated with an access mode declaring which part of the memory is 
read or written by the component. The programming framework uses the component
annotations to guarantee the validity of the memory accesses. This is the mechanism used in VectorPU, a C++ library for programming CPU-GPU heterogeneous  systems. 
This article proves the correctness of the software cache-coherence mechanism used in VectorPU. 
Beyond the scope of VectorPU, this article provides a simple and effective formalisation of memory consistency mechanisms based on the explicit declaration of the effect of each component on each memory space. 
The formalism we propose also takes into account arrays for which a single validity status is stored for the whole array; additional mechanisms for dealing with overlapping arrays are also studied. 

\end{abstract}

\begin{keyword}
Memory consistency \sep CPU-GPU heterogeneous systems \sep  
data transfer \sep  software caching \sep  cache coherence 


\end{keyword}

\end{frontmatter}

\section{Introduction}

Heterogeneous computer systems, such as traditional CPU-GPU based systems, 
often expose disjoint memory spaces to the programmer,
such as main memory and device memory, with the need
to explicitly transfer data between these.
The different memories usually 
require different memory access operations and
different pointer types. 
Also, encoding memory transfers as message passing communications
leads to low-level code that is more error-prone. 
A commonly used software technique to abstract away the
distributed memory, the explicit message passing,
and the asymmetric memory access mechanisms
consists in providing the programmer with an 
object-based shared memory emulation. For CPU-GPU systems,
this can be done in the form of special data-containers,
which are generic, STL-like data abstractions such as 
\verb+vector<...>+ that 
wrap multi-element data structures such as arrays. These 
data-container objects
 internally perform transparent, coherent
software caching
of (subsets of) accessed elements in the different memories 
so they can be reused (as long as not invalidated) 
in order to
avoid unnecessary data transfers. Such data-containers 
(sometimes also referred to as ''smart'' containers as they can
transparently perform data transfer and memory allocation optimizations
\cite{Dastgeer-IJPP15}) 
are provided in a number of programming frameworks 
for heterogeneous systems, such as
StarPU \cite{StarPU} and SkePU~\cite{Enmyren10,Dastgeer-IJPP15}. StarPU is a 
C-based library that provides API functions to
define multi-variant tasks for dynamic scheduling
where the data containers are used for modeling 
the operand data-flow
among the dynamically scheduled tasks. 
SkePU defines device-independent 
multi-backend skeletons like map, reduce,
scan, stencil etc.\ where operands are
passed to skeleton calls within data containers.

VectorPU \cite{VectorPU-2017} is a recent C++-only
open-source
programming framework for CPU-GPU heterogeneous systems. 
VectorPU relies on the specification of 
\textit{components}, which are functions that contain kernels for execution
on either CPU or GPU. Programming in VectorPU is thus not restricted 
to using predefined skeletons like SkePU, 
but leads to more high-level and more concise code than StarPU. 
Like StarPU, VectorPU requires the programmer
to annotate each operand of a component
with the access mode (read, write, or both) including the 
accessing unit (CPU, GPU), and uses smart data containers for automatic transparent
software caching based on this access mode information.

The implementation of VectorPU makes excessive use of static metaprogramming; this provides a light-weight realization of the access mode annotations and of the software caching, 
which only require a standard C++ compiler. Emulating these 
light-weight
component and access mode constructs without additional language
and compiler support (in contrast to, e.g., OpenACC or OpenMP), 
leads however to some compromises concerning the possibility to perform static analysis.
In particular, VectorPU has no explicit type system for the
access modes, as these are not known to the C++ compiler.

In this paper, we  formalize access modes
and data transfers in CPU-GPU heterogeneous systems and prove 
the correctness of the software
cache coherence mechanism used in VectorPU.
The contributions of this paper are:

\begin{itemize}
\item A simple effect system modeling the semantics of memory
   accesses and communication in a CPU-GPU heterogeneous system,
\item A small calculus expressing different memory
   accesses and their composition across program traces. 
\item The interpretation of VectorPU operations as higher-level statements
    that can be translated into the core calculus,
\item A proof that, if all memory accesses are performed 
    through VectorPU operations, the memory cannot reach an 
    inconsistent state and all memory accesses succeed,
\item The abstraction necessary to take into account arrays, possibly overlapping, in the formalism.
\end{itemize}

This article is an extended version of \cite{HKL-4PAD2018}, with two main additions. First the relationship between the formal results and the VectorPU implementation is  detailed, illustrating the impact of the proven results on the behaviour of the library. Second, the theoretical framework is extended to take into account the fact that manipulated arrays may overlap and that the consistency mechanism must take this information into account to be correct. While overlapping arrays are not yet supported by VectorPU, based on the formal model we develop, we show how a simple extension of the library could provide support for overlapping arrays.

This paper is organized as follows:
Section~\ref{VectorPU} reviews VectorPU as far as required for  this paper, for further information we refer to \cite{VectorPU-2017}.
Section~\ref{sec:Formal} provides our formalization of VectorPU programs and
their semantics, and proves that the coherence mechanism used in
VectorPU is sound. Section~\ref{sec:RW} discusses related work, and 
Section~\ref{sec:conclusion} concludes.

\begin{figure}
\begin{center}
\includegraphics[width=0.64\textwidth]{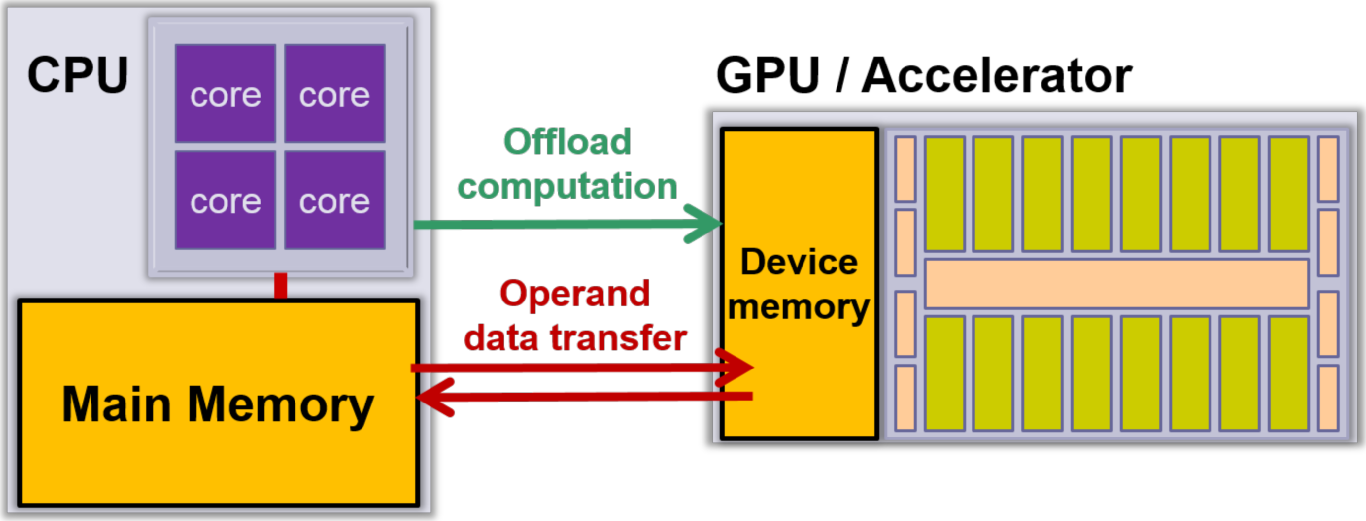}
\caption{\label{fig:CPU-GPU}A GPU-based system with distributed address space}
\end{center}
\end{figure}

\section{VectorPU}\label{VectorPU}

In heterogeneous systems with separate address spaces, for example in 
many GPU-based systems, a general-purpose processor
(CPU) with direct access to main memory is connected by some network
(e.g., PCIe bus) to one or several accelerators
(e.g., GPUs) each having its own device memory,
see Figure~\ref{fig:CPU-GPU}. Native programming models
for such systems such as CUDA typically expose the distributed address spaces to the programmer, who has to write explicit 
code for data transfers and device memory management.
Often, programs for such systems  must be organized in multiple source files
as different programming models and different toolchains
are to be used for different types of execution unit.
This enforces a low-level programming style. 
Accordingly, a number of single-source 
programming approaches have been proposed that abstract away the distribution 
by providing a virtual shared address space. Examples include directive-based
language extensions such as OpenACC and OpenMP4.5, and C++-only approaches 
such as the library-based skeleton programming framework SkePU \cite{Dastgeer-IJPP15}
and the recent macro-based framework \textit{VectorPU}.

\textit{VectorPU} \cite{VectorPU-2017} is an open-source\footnote{http://www.ida.liu.se/labs/pelab/vectorpu, https://github.com/lilu09/vectorpu} lightweight C++-only 
high-level programming layer
for writing single-source heterogeneous programs for Nvidia CUDA GPU-based systems.
Aggregate operand data structures 
passed into or out of  function calls are
to be wrapped by special data containers known to VectorPU.
VectorPU currently provides one generic data container,
called \verb+vector<...>+,
with multiple variants that 
eliminate the overhead of managing heterogeneity and distribution when not required (e.g., when no GPU is available). 
\verb+vector<...>+ inherits functionality from STL \verb.vector. 
and from Nvidia Thrust \verb.vector., and
wraps a C++ array allocated in main memory. 
VectorPU automatically creates on demand
copies of to-be accessed elements in device memory and keeps all copies coherent using
a simple coherence protocol, data transfers are only performed when needed. 

VectorPU programs are organized as a set of C++ functions, some of which
might internally use device-specific programming CUDA constructs\footnote{%
VectorPU allows to directly annotate a CUDA kernel function, in addition to annotating its C++ wrapper function.} while others
are expected to execute on the host, using one or possibly multiple cores.
VectorPU \emph{components} are functions that are supposed to contain (CPU or device)
kernel functionality and for which  operands are passed as VectorPU data container objects. 
Components and the types of execution units that
access their operands are declared 
by annotating the operands of the function, either at a call of the function
or for the formal parameters in the function's declaration, 
with VectorPU \emph{access mode specifiers}. For example, in contrast,  SkePU \cite{Enmyren10}  overloads element
access and iterator operations so that monitored 
accesses are also possible on demand in non-componentized (i.e., 
ordinary C++) CPU code.
VectorPU only relies on access mode annotations 
to perform lazy data transfer,
not knowing when data is going to be accessed inside a component.

Table~\ref{tab:modes} summarizes the access mode annotations
currently defined for VectorPU. The access mode specifiers,
such as \texttt{R} (read on CPU), \texttt{W} (write on CPU), \texttt{RW} 
(update, i.e., both read and write, on CPU), \texttt{GR} (read on GPU) and so forth,
are available both as annotations of function signatures and
as C++ preprocessor macros that expand at compilation into (possibly, device-specific) C++ pointer
expressions and side effects that allow to generate device specific access code
and use device-specific pointer types for the chosen execution unit. 
For instance, \texttt{GW(x)} expands to a GPU pointer to
the GPU device copy of \texttt{x},
which might be dereferenced for GPU writing accesses to \texttt{x},
such as the GPU code: \verb:*( GW(x) + 2 ) = 3.14:.
\texttt{GWI(x)} evaluates to a Thrust-compatible iterator onto the 
GPU device copy of \texttt{x}, and \texttt{WEI(x)} to an iterator-end reference
to the last element of \texttt{x} on CPU side. The current VectorPU prototype implementation does not
(yet) check access-mode annotations in signatures of externally defined functions.
It is also possible to specify partial access of a \verb:vector:
instead of the
entire \verb.vector. data structure. The current
VectorPU implementation does not (yet) support coherence for 
\emph{overlapping}
intervals of elements resulting from multiple (partial) accesses
some of which (may) access the same element.
A solution for this problem has been described for SkePU
smart containers by Dastgeer~\cite{Dastgeer-IJPP15}. Section~\ref{sec:overlap-array} details a solution for handling overlapping arrays in VectorPU.

\begin{table}
\caption{\label{tab:modes}VectorPU access mode annotations for a parameter  \cite{VectorPU-2017}}

\begin{center}
\begin{tabular}{|lll|}
\hline
Access Mode & On Host & On Device \\
\hline
Read pointer & \texttt{R} & \texttt{GR} \\
Write pointer & \texttt{W} & \texttt{GW} \\
Read and Write pointer & \texttt{RW} & \texttt{GRW} \\
Read Iterator & \texttt{RI} & \texttt{GRI} \\
Read End Iterator & \texttt{REI} & \texttt{GREI}\\
Write Iterator & \texttt{WI} & \texttt{GWI}\\
Write End Iterator & \texttt{WEI} & \texttt{GWEI}\\
Read and Write Iterator & \texttt{RWI} & \texttt{GRWI}\\
Read and Write End Iterator & \texttt{RWEI} & \texttt{GRWEI}\\
Not Applicable & \texttt{NA} & \texttt{NA} \\
\hline
\end{tabular}
\vspace{-3ex}
\end{center}
\end{table}

The following example (adapted from \cite{VectorPU-2017}) 
of a CUDA kernel wrapped in an 
 annotated function \verb.bar. shows the use of 
VectorPU access mode annotations at function declaration:

{\footnotesize \begin{verbatim}
// Example (annotations at function declaration): 
__global__
void bar ( const float *x [[GR]], float *y [[GW]],
                 float *z [[GRW]], int size )
{ ... CUDA kernel code ... }
\end{verbatim}}

Here, the operand array pointed to by \verb.x. may be read (only) by the GPU within \verb.bar.,
operand array \verb.y. must be written (only) by the GPU, and 
operand array \verb.z. may be read and/or written by the GPU.
When calling \verb.bar., the
first three operands are  passed as VectorPU \verb.vector. 
container objects.
The \verb.size. formal parameter is a scalar (not a data container), so it
will be available on GPU on a copy-in basis but no coherence will
be provided for it by VectorPU.

It is also possible to put the annotations into a call, and hence characterize a function as a VectorPU component:

{\footnotesize \begin{verbatim}
// declare a CPU function:
void foo ( const float *x, float *y, float *z, int size );

// declare three vectors:
vectorpu::vector<float> vx(100), vy(100), vz(100);

// call to VectorPU annotated function foo:
foo ( R( vx ), W( vy ), RW( vz ), size ) ;
\end{verbatim}}

Here, the access mode specifiers and the resulting coherence policy
 only apply to that particular invocation of \verb-foo-, while other
invocations of \verb-foo- might use different access mode specifiers.

The following example shows how to use iterators:

{\footnotesize \begin{verbatim}
vectorpu::vector<My_Type> vx(N);
std::generate( WI(vx), WEI(vx), RandomNumber );
thrust::sort( GRWI(vx), GRWEI(vx));
std::copy( RI(vx), REI(vx), ostream_iterator<My_Type>(cout, ""));
\end{verbatim}}

\noindent 
where \verb.std::generate. is a CPU function filling a section between
two addresses with values (here, random numbers),
and \verb.thrust::sort. denotes the GPU sorting functionality
provided by the Nvidia Thrust library. 



\subsection{Partial Vectors}
\label{sec:Partial Vectors}


Using iterators, it is possible in VectorPU to define references to
contiguous subranges of a vector,
called \textit{partial vectors} (\texttt{pvector}s), 
which can be passed as \texttt{vector}-compatible
operands to a function instead of an entire \texttt{vector}.
In this way, it is possible to pass several (disjoint or even
overlapping) \texttt{pvector} objects as seemingly different \texttt{vector}
arguments that however are just windows onto
a common \texttt{vector} container variable. 
In contrast to \texttt{vector}s without \texttt{pvector}s, where coherence is 
managed automatically by VectorPU, the coherence 
management in the presence of \texttt{pvector}s 
is exposed to the programmer.

A partial vector can be initialized by two iterators to a normal VectorPU \texttt{vector} (we call it \emph{mother vector}).
No new memory is allocated for this partial vector, 
only the two iterators are stored, 
and the coherence state for its range in the mother vector.
When a partial vector is declared, it automatically
inherits the coherence state information
from its mother vector. 

\begin{figure}
\noindent 
\begin{minipage}{\linewidth}
\begin{footnotesize}
\begin{verbatim}
struct my_set {
   template <class T>
   __host__ __device__
      void operator() (T &x) { x+=101; } 
};
vectorpu::vector<int> vx(10);  // the mother vector
vectorpu::pvector<int> vy(x, vx.begin(), vx.begin()+2);
vectorpu::for_each<int>( GWI(vy), GWEI(vy), my_set() );
vectorpu::for_each<int>( GWI(vy), GWEI(vy), my_set() );
SR( vy );  // explicit coherence management
vectorpu::for_each<int>( RI(vx), REI(vx), [](auto x) {cout<<x<<" ";} );  
\end{verbatim}
\end{footnotesize}
\end{minipage}
\caption{\label{fig:pvector}Example of using a partial vector (\texttt{pvector}) and the \texttt{SR} macro for explicit coherence management. (Note: \texttt{pvector} is a short form and actually called \texttt{parco\_vector} in the VectorPU API.)}
\end{figure}

The aliasing introduced by \texttt{pvectors} can lead to coherence problems. One such scenario could be that the
programmer intends to operate on the previous \texttt{vector} again after
some part of it was updated via a \texttt{pvector}, 
hence the whole \texttt{vector} would be in an inconsistent state.
In such cases, VectorPU expects the programmer to use a macro \texttt{S$X$} (\texttt{S}ynchronize for access mode $X$,
such as \verb.SR. for synchronized read)
just before the programmer operates on the whole vector again.
It may be inefficient for a \texttt{pvector} to perform the \texttt{S$X$} synchronization
automatically, because multiple operations can be performed on the same \texttt{pvector} 
before accessing the whole \texttt{vector} again, 
and because the \texttt{pvector}
has no knowledge about when the operations on itself will be finished, hence
keeping them coherent each time is not necessary and thus a waste of performance.

Figure~\ref{fig:pvector}
shows an example of using a \texttt{pvector} and the 
\texttt{S$X$} macro.
It initializes a mother vector \texttt{vx}
and a \texttt{pvector} \texttt{vy} on it.
The following two lines change part of \texttt{vy}'s value multiple times. 
The \texttt{SR} macro explicitly restores coherence for
\texttt{vy} before the following read access, 
resulting in a write-back of \texttt{vy} elements in GPU device memory
to their locations in \texttt{vx},
thus also \texttt{vx} as a whole becomes coherent again
and line 6 is safe to operate on the whole \texttt{vx}.


\begin{figure}[tb]
\begin{minipage}{.48\textwidth}
\begin{small}
\begin{verbatim}
void coherent_on_cpu_r(){
  	 if( !cpu_coherent_unit ){
       download();
       cpu_coherent_unit=true;
  	 }
}
void coherent_on_cpu_w(){
  	 cpu_coherent_unit=true;
  	 gpu_coherent_unit=false;
}
void coherent_on_cpu_rw(){
  	 if( !cpu_coherent_unit ){
       download();
       cpu_coherent_unit=true;
  	 }
  	 gpu_coherent_unit=false;
}
\end{verbatim}
\end{small}
\end{minipage}\quad
\begin{minipage}{.48\textwidth}
\begin{small}
\begin{verbatim}
void coherent_on_gpu_r(){
  	 if( !gpu_coherent_unit ){
       upload();
       gpu_coherent_unit=true;
	 }
}
void coherent_on_gpu_w(){
  	 gpu_coherent_unit=true;
  	 cpu_coherent_unit=false;
}
void coherent_on_gpu_rw(){
  	 if( !gpu_coherent_unit ){
       upload();
       gpu_coherent_unit=true;
  	 }
  	 cpu_coherent_unit=false;
}
\end{verbatim}
\end{small}
\end{minipage}
\caption{\label{fig:vectorpucoherence}Coherence control code,
    here for simple vectors,
    in \texttt{vectorpu.h}. Functions \texttt{download} and
    \texttt{upload} are implemented using CUDA
    \texttt{thrust::copy}. Validity of copies on CPU and GPU
     is indicated
     by the flags \texttt{cpu\_coherent\_unit} and
     \texttt{gpu\_coherent\_unit} respectively; both are
     initialized to \texttt{true}
     in code allocating new vectors (not shown).}
     %
     %
\end{figure}

\subsection{Implementation Notes}

\paragraph{Coherence protocol}
In the source code of VectorPU, the code
relevant for our work is the part of \verb+vectorpu.h+%
\footnote{The VectorPU source code can be found at
\texttt{https://github.com/lilu09/vectorpu}.} 
that handles coherence. Its implementation for the various
variants of \texttt{vector} 
(see the code excerpt in Fig.~\ref{fig:vectorpucoherence} for
simple \texttt{vector}s) follows a
simple valid-invalid protocol.

\paragraph{Expansion of macro annotations to device-specific pointers}

For function parameters, the macro annotations expand into appropriate C code to fit their function call context.
For illustration, we show the simplified code after a function parameter's expansion ($\longrightarrow$) for four typical annotations,
where \texttt{vx} refers to a VectorPU \texttt{vector} instance:

\begin{small}
\begin{itemize}
 \item \texttt{R(vx)} $\longrightarrow$ 
 \begin{minipage}[t]{0.7\textwidth}
  \texttt{set\_coherence\_state();}\\
  \texttt{return this->std::vector<T>::data();}\\
  \texttt{//casted as const in return value}
 \end{minipage}
 \item \texttt{W(vx)} $\longrightarrow$ 
 \begin{minipage}[t]{0.7\textwidth} \texttt{set\_coherence\_state();}\\
 \texttt{return this->std::vector<T>::data();}
 \end{minipage}
 \item \texttt{GR(vx)} $\longrightarrow$ \begin{minipage}[t]{0.7\textwidth}
 \texttt{set\_coherence\_state();}\\
 \texttt{return thrust::raw\_pointer\_cast(}\\
 \hspace*{1em}\texttt{\& (* thrust::device\_vector<T>::begin() ) );}\\
 \texttt{//casted as const in return value}
 \end{minipage}
 \item \texttt{GW(vx)} $\longrightarrow$ 
\begin{minipage}[t]{0.7\textwidth} \texttt{set\_coherence\_state();\\ return thrust::raw\_pointer\_cast(}\\
\hspace*{1em}\texttt{\& (* thrust::device\_vector<T>::begin() ) );}
\end{minipage}
\end{itemize}
\end{small}

Hence, each annotated parameter is expanded to some code snippet\footnote{One can think of those code snippets as
anonymous functions or lambda functions. 
In the real scenarios these code snippets are encapsulated within a function,
and each macro as shown above expands to a call to its function.}.
In all scenarios the expanded macros first update the coherence state according to the annotation's semantics.
Then, for the CPU cases, a pointer to a \texttt{std::vector} is returned,
and for the GPU cases, a pointer to a Thrust pointer (which is a pointer to GPU memory space) is returned.
For read-only cases, the return value is casted to \texttt{const} to ensure type safety in its function invocation.
For write-only cases, no such \texttt{const} cast happens.

\paragraph{Partial vector implementation and memory management}
For implementing the \texttt{pvectors} atop \texttt{vector}s, 
VectorPU uses the simplistic approach of allocating memory for
the \emph{entire} \texttt{vector} on the device 
even if the \texttt{pvector}(s) might only
access a minor part of it. This may waste device memory space 
but makes local address calculations easy, and anyway only
the accessed elements (the \texttt{pvector} range) will be transferred.
As we will see later, it also simplifies coherence management
for \emph{overlapping} \texttt{pvector} accesses, which was not 
really foreseen 	in the original VectorPU design.


\subsection{Efficiency}

Using only available C++(11) language features, 
VectorPU provides a flexible unified memory
view where all data transfer and device memory management
is abstracted away from the programmer. Nevertheless,
its efficiency is on par with that of handwritten CUDA code
containing explicit data movement and memory management code
\cite{VectorPU-2017}.
In particular, the VectorPU prototype was shown to
achieve 1.4x to 13.29x speedup over good quality
code using Nvidia's \textit{Unified Memory} API
on several machines ranging from laptops to supercomputer nodes,
with Kepler and Maxwell GPUs. For a further discussion of
VectorPU features, such as
specialized versions of \verb-vector-, for descriptions
of how to use VectorPU together with lambda expressions 
e.g.\ to express skeleton computations, and for further
experimental results 
we refer to \cite{VectorPU-2017}.

\section{A Formalization for Reasoning on  Consistency in VectorPU}\label{sec:Formal}

In this section we provide a minimal calculus to reason on the memory operations that can 
exist in a framework that deals with memory consistency like VectorPU. We first define a 
set of effects that operations can have on the consistency of the memory. Then we define a 
small calculus expressing different memory accesses and their composition into complex 
procedures. Finally, we express VectorPU operations as higher-level statements that can 
be translated  into the core calculus, and show that if all memory 
accesses are annotated correctly through VectorPU annotations the program cannot try to 
access an invalid data and the memory spaces are put in coherence when needed. We also 
show that VectorPU tracks the validity status of the memory adequately. In this 
section we abstract away the values stored in memory and we 
do not deal with any form of aliasing. A more precise analysis of aliasing is 
out of the scope of this paper, it could be for example inspired 
from~\cite{Nielson1999} or from our extension to overlapping array (Section~\ref{sec:overlap-array}).
We place ourself in a simplified setting where each variable is hosted in exactly two 
memory locations, e.g.\ a CPU (main) memory and a GPU memory location; our results could be extended to multiple memory 
locations without any major difficulty.

\subsection{An effect system for consistency between memory locations}
We start from a simple effect system, it expresses the effect of writing or reading a 
memory 
location on the consistency status of the memory. Each location is either in  
\textit{valid} state when it holds a usable data 
or \textit{invalid} state when the value at the location is not valid anymore.
We express five  operations: reading, writing, 
\Push\ for uploading the local memory location into the other one, and \Pull\ for the 
contrary. \noop\  does nothing.
 \[E::= \Push ~|~ \Pull ~|~ r~|~ w ~|~ 
\noop\]

These operations involve a single memory 
location. We 
express 
below the semantics of each of the operations on the consistency status of the concerned 
memory location. 
The \textit{memory status of a variable}
is a pair of the status of its locations, 
where each status is 
either $V$ for valid or $I$ for invalid. The first element is the status of the local 
memory, and the second one is the status of the remote memory. For example, for a 
program running on a CPU while the remote memory is a GPU, a status $(V,I)$ means that 
the memory is valid and can be read on the CPU, but is invalid on the GPU and should be 
transferred before being usable there.
Each operation has a signature in the sense that it may require a certain memory status 
and 
will produce another memory status. The signature of each operation is expressed below and is called its \emph{effect}.
We use
variables --$X$, $Y$-- that are considered universally quantified in each rule. 
They can 
be instantiated with either $V$ or $I$.
\begin{mathpar}
\Push: (V,X)\mapsto (V,V)

\Pull: (X,V)\mapsto (V,V)

r: (V,X)\mapsto (V,X)


w: (X,Y)\mapsto (V,I)

\noop: (X,Y)\mapsto (X,Y)
    \end{mathpar}

These signatures are effects  expressing that
$r$ is a reading operation requiring validity of data and ensuring not to modify it, 
the distant status is unchanged; $w$ 
is a  writing operation that modifies data locally but do not require validity, it 
invalidates the remote memory.  \Push\ uploads the local memory and thus makes valid the 
distant memory; 
it requires that the data is locally valid, and \Pull\ is the symmetrical operation.
Applying these signature consists in trying to \emph{unify} the current memory status with the effect of the variable, potentially instantiating $X$ and $Y$ appropriately. No unification is possible if the status and the effect cannot be made identical by instantiating variables.

\begin{example}
An operation $r$ can be applied on a validity status $(V,V)$, leading to the validity status $(V,V)$ because $(V,V)$ can be unified with $(V,X)$ by instantiating $X$ with $V$. However $r$ can not be applied on a validity status $(I,V)$ because $(V,X)$ cannot be unified with $(I,V)$: $I$ and $V$ are different.
\end{example}

An additional operation could be defined: 
 an 
$rw$ operation would represent a read and/or write access, it would both require data 
validity and invalidate 
remote status: $(V,X)\mapsto (V,I)$. This operation is however not needed here but we will have a similar one at the annotation level, see below.

\subsection{A language for modelling consistency and effects}\label{sec-core}
We now create a core calculus to  reason on programs involving sequences of 
effects on different memory locations. $x,y$ range over variables and we introduce  statements manipulating 
variables. We use sequence and simple loops and conditionals. 
Operations with effects 
now apply to a variable: $E$ $x$ denotes some operation $E$ on variable $x$;   $\rem{E~x}$  is a remote operation 
 on the remote memory. We also have a $\noop$ operation that has no effect and can be considered as a neutral element for the sequence.
Statements $S$ are defined as:
{\small \[S::=E~x~|~\rem{E~x}~|~S;S'~|~\while(\cond) S ~|~ \IF{\cond}{S}{S'}~|~\noop\]}
\vspace{-3.5ex}
\begin{example}
A GPU procedure 
writing $x$ and reading $y$ would correspond to the pseudo-code: $\rem{w~x};\rem{r~y}$. \end{example}

We are  interested  in conditionals dealing 
with the validity status of the variables. Other conditionals  are expressed as a generic 
binary operator $\oplus$ but  operators with different arities could be added as 
well:
\[\cond::=\isvalid~x~|~\isremvalid~x~|~x\oplus y\]
\noindent where $\isvalid~x$, resp. $\isremvalid~x$,  denote checks of the validity status flag of the local, resp. remote, location of $x$.

We now define a small step operational semantics for our core calculus.
It relies on the validity status of variables, recorded in a store $\sigma$ mapping 
variable names to validity pairs. Semantics is written as a transition relation between 
pairs consisting of a statement and a store: $(S,\sigma)$. The sequencing operator $;$ is associative with \noop\ as a neutral element. 
Consequently each non-empty sequence of instruction can be rewritten as $S;S'$ where $S$ 
is neither a 
sequence nor \noop. $\sigma[x\mapsto (X,Y)]$ is the update operation on maps. 

\begin{figure*}[t]
\begin{small}
\begin{mathpar}
\inferrule[valid]
{\sigma(x)=(V,X)}
{\feval{\isvalid~x}{\sigma}=\True}

\inferrule[invalid]
{\sigma(x)=(I,X)}
{\feval{\isvalid~x}{\sigma}=\False}

\inferrule[rem-valid]
{\sigma(x)=(X,V)}
{\feval{\isremvalid~x}{\sigma}=\True}

\inferrule[rem-invalid]
{\sigma(x)=(X,I)}
{\feval{\isremvalid~x}{\sigma}=\False}

\inferrule[Effect]
{\sigma(x)=(X,Y) \\ E:  (X,Y)\mapsto (Z,T)  }
{(E~x;S,\sigma)\to (S,\sigma[x\mapsto(Z,T)])}

\inferrule[Remote Effect]
{\sigma(x)=(X,Y) \\ E:  (Y,X)\mapsto (Z,T) }
{(\rem{E~x};S,\sigma)\to (S,\sigma[x\mapsto(T,Z)])}

\inferrule[While-True]
{\feval{\cond}{\sigma} }
{(\while(\!\cond) S;\!S',\sigma)\!\to\! (S;\!\while(\!\cond) S;S',\sigma)}
\qquad
\inferrule[While-False]
{\neg\feval{\cond}{\sigma} }
{(\while(\!\cond) S;\!S',\sigma)\!\to\! (S',\sigma)}

\inferrule[IF-True]
{\feval{\cond}{\sigma} }
{((\IF{\!\cond} S S');\!S'',\sigma)\!\to\! (S;\!S'',\sigma)}
\quad~
\inferrule[IF-False]
{\neg\feval{\cond}{\sigma} }
{((\IF{\!\cond} S S');\!S'',\sigma)\!\to\! (S';\!S'',\sigma)}
    \end{mathpar}
\end{small}
\caption{Operational semantics of validity status.}\label{fig:Opsem}
\end{figure*}

The semantics is presented in Figure~\ref{fig:Opsem}. Like in the previous section, we 
use validity variables $X$, $Y$, $Z$, $T$ 
that are universally quantified in each rule.
 The first 
 four rules present the 
evaluation of conditional statements, we assume additional rules exist for evaluating 
$\oplus$\footnote{We  suppose  that evaluation of $\oplus$ always succeeds, and in particular  variables accessed by the operation are specified as a $r$ operation preceding the condition.}. The next rule applies an effect of the operation $E$ on a variable $x$ updating the validity store, 
and the \textsc{Remote Effect} rule applies an operation occurring on the distant memory, it 
applies the symmetric of the effect of the operation to the variable: \emph{validity values are switched compared to the non-remote effect}. Note that \Push\ is the symmetric of 
\Pull\ and we could have removed one of those two operations without loss of generality: $\Pull$ is the same as $\rem \Push$. 
The last rules are standard ones for \symb{if} and \symb{while} statements.

\noindent\emph{Initial state:} To evaluate a sequence of statements $S$ using the 
variables 
$\vars(S)$, we create an initial configuration with 
a store where data is hosted on the CPU and all variables are  mapped to $(V,I)$: 
$\sigma_0=(x\mapsto 
(V,I))^{x\in \vars(S)}$.


A configuration is \emph{reachable} if it is possible to obtain this configuration 
starting from the initial configuration and applying any number of reductions: 
$(S,\sigma)$ is reachable if $(S,\sigma_0)\to^*(S',\sigma)$ where $\to^*$ is the 
reflexive 
transitive closure of $\to$. We write $(S,\sigma)\not\to $ and say that the configuration 
is \emph{stuck} if no reduction rule can be 
applied on $(S,\sigma)$.
\begin{Property}[Progress]\label{prop:stuck}
 A configuration is stuck if the validity status of the accessed variable is 
incompatible with the effect to be applied:\\[-1ex]
\[\begin{array}{@{}l@{}}
(S,\sigma)\not\to\ \iff~
 \begin{array}[t]{@{}l@{}}
S=E~x;S' \land \sigma(x)=(X,Y) 
							\land E:  (X',Y')\mapsto (Z,T)\ 
						\\\qquad 	\land~\text{there is no unification  between } 
							(X,Y)   
							\text{ and } (X',Y')\\
\lor S\!=\!\rem{E~x};S' \land \sigma(x)\!=\!(X,Y) 
							\land E:  (X',Y')\!\mapsto\! (Z,T)\ 
							\\\qquad \land~\text{there is no unification  between } 
							(X,Y)   
							\text{ and } (Y',X')	\\
\lor S\!= \noop	
\end{array}
\end{array}\]
Note that this supposes that $\oplus$ always succeeds.
\end{Property}
\begin{proof}
Recall each non-empty sequence of instruction, here $S$,  can be rewritten as $S';S''$ where $S$ 
is neither a sequence nor \noop.
If the sequence is empty, $S=\noop$, and the execution is finished, it corresponds to the last case of the rule.

By case analysis on the first statement of $S'$, there is always 
one rule applicable provided the premises of the rule can be evaluated.

In  case the statement is an if or a while, it corresponds to the last four rules this requires the evaluation of \cond. If $\oplus$ 
always succeeds then \cond\ can always be 
evaluated to either \True\ or \False\ and consequently one of the rule can always be applied.

 The only cases remaining for $S$ are $E~x$ and $\rem {E~x}$. The applicable rules are \textsc{Effect} and \textsc{Remote Effect}. These rules can always be applied except if there is no unification possible between the 
effect of an operation and the current validity status of the affected variable, i.e. there is no instantiation of $X$ and $Y$ such that both $\sigma(x)=(X,Y)$ and $E: (X,Y)\to(Z,T)$ in the case of \textsc{Effect}. This  corresponds to the two first
cases expressed in the theorem (one for \textsc{Effect} and one for \textsc{Remote Effect}).
\end{proof}

\begin{Property}[Safety]\label{prop:safe}
A state is said to be \emph{unsafe} if at least one variable is mapped to 
$(I,I)$.
It is impossible to reach an unsafe state from the initial state.
\end{Property}
\begin{proof}
The principle is that unsafe states are avoided  because of the effects of operations. Indeed 
only the two effect rules (\textsc{Effect} and \textsc{Remote Effect}) modify the store and no effect can reach $(I,I)$, except 
$\noop\ x$ starting from $\sigma(x)=(I,I)$. This is sufficient to conclude, by recursion, as the initial state is not $(I,I)$.
\end{proof}

\begin{example} The sequence
   $w~ x;\rem{r~x}$ can never be fully evaluated and will lead to a stuck configuration. Indeed, $(w x;\rem{r~x},(x\mapsto 
    (V,I)))\to (\rem{r~x},(x\mapsto (V,I)))$, but  $\rem{r~x}$ requires that $x$ is 
    mapped to $(X,V)$ for some $X$ which is not the case.

However if we add a \Push\ operation to ensure the validity of the accessed memory the 
program $w~x;\Push~x;\rem{r~x}$ can be reduced as follows:\\
$(w x;\rem{r~x},(x\mapsto (V,I)))\\~\qquad\to (\Push~x;\rem{r~x},(x\mapsto 
(V,I)))\\~\qquad\to 
(\rem{r~x},(x\mapsto (V,V)))\to (\noop,(x\mapsto (V,V)))$
\end{example}

\subsection{Declaring access modes and adding an abstraction layer}
The calculus defined above only considers simple memory locations and directly manipulates 
them.
But VectorPU and  similar libraries manipulate structures 
representing the memory. For example, VectorPU vectors act as an
 abstract representation of a set of memory 
locations. In this section, we add a declaration and abstraction layer to the calculus to 
represent the access mode declarations that will trigger data transfers according to the 
consistency mechanism. 
This abstraction layer is also a necessary first step to the modelling of array 
structures that we will present in Section~\ref{sec-arrays}. Indeed, in array structures, 
the 
validity status of the array is abstracted away by a single validity status pair. Then 
a dynamic abstraction of the consistency status of the memory can be used.
Abstract variables are not part of the applicative code but can be used in the access mode declarations. 

Consider for example an array $x$. As it is not desirable to store the information of the validity status of all the elements of the array, we will create for this array and abstract variable $\abs x$ that will represent all the elements of the array -- $x[i]$ -- from the validity status point of view. If the representation was precise, a value $(V,I)$ for $\abs x$ would mean that all the elements of the array are valid locally and invalid remotely. In practice we need to authorize some approximation of the validity status, at least because most operations only act on some of the elements of the array. Consequently, some  information is lost in the abstract representation: if $\sigma(\abs x)=(V,I)$ then each element of the array $x$ must be locally valid, the remote elements may be valid or invalid. Like with non-abstract variables, having $\sigma(\abs x)=(I,I)$ is  not desirable as the coherence protocol would be unable to know where are the valid elements of the array -- even if for all $i$, $x[i]=(V,I)$ or $x[i]=(I,V)$. In the case of arrays and  usual data-structures, the mapping between abstract and concrete elements is quite trivial: one abstract variable represents a whole data structure. Other mappings (one variables for several structures or splitting a data structure) could be defined but their  definition might be too complex to be usable in practice.

The abstraction and declaration layer relies on two principles:
\begin{itemize}
\item Each variable $x$ has an abstract variable $\abs x$ that represents it. In this 
section there is 
a single variable for each representative, but when we deal with arrays we will 
have a single representative for the whole array.
\item It is safe to ``forget'' that one memory space holds a valid copy of the data if 
the other memory space has a valid one. In other words, $(V,I)$ (resp. $(I,V)$) is a safe 
abstraction of $(V,V)$ and we have $(V,I)\!\leq\! (V,V)$ (resp. $(I,V)\!\leq\! (V,V)$). Also for any $X$ and $Y$ we have $(X,Y)\!\leq\! (X,Y)$.
\end{itemize}

\paragraph{Syntax}
We now define access mode declarations:\\[-3.3ex]
\begin{align*}
\AM&::=R\ \abs x \,|\, W\ \abs x \,|\, RW\ \abs x \,|\, 
\rem{R\ \abs x} \,|\,\rem{W\ \abs x} \,|\,\rem{RW\ \abs x} \,|\, \\
&~ \AM \land \AM' \quad \text{(where variables in $\AM$ and $\AM'$ are disjoint)}
\end{align*}

These access modes declare the kind of access (read $R$, write $W$, or read and/or write 
$RW$) that 
can be performed on the variable $x$ represented by $\abs x$. In a set 
of access mode declarations the same variable cannot appear twice\footnote{This restriction simplifies the formal definition and the reasoning. Extending the results to the same variable appearing twice with the same access mode is trivial (if the same parameter is passed twice to the same function). Having the same memory location declared twice with different access modes is not safe in the general case but Section~\ref{sec:overlap-array} will study more precisely the case of overlapping arrays.}. There exist declared 
access modes for  local accesses and for  the 
remote memory space.

A program is a sequence of calls to functions or components (i.e., statements accessing 
only real variables) 
each protected by an access 
mode declaration (on abstract variables representing the real variables):
\[\Prog::=\AM_1\{S_1\};\AM_2\{S_2\};\ldots\]
We write  $S\in S'$ if $S$ is one statement inside $S'$ (i.e. $S$ is a sub-term of 
$S'$).

We  define below the semantics of these programs  and specify well-declared program by 
comparing the statements they contain with 
the declared access modes. The semantics relies on the translation of 
the access mode declarations into consistency mechanisms 
with checks and data transfers 
triggered 
before each function 
execution.

\paragraph{Extension of statements to abstract variables}
When evaluating a program, the store contains both real and abstract variables, and the 
existing 
statements have the same effect on the abstract variables as on the real ones. However 
one should notice that 
even if the effect is the same, the meaning of a statement acting on a real variable 
or on its representative is different: in our calculus, the effect on a variable is an 
abstraction of the real effect that involves side effects and data transfers. On the 
contrary, only the validity status of abstract variables is stored by the library: the 
effect triggered by an operation on an abstract variable is exactly what happens when 
VectorPU updates the validity status of its internal structures.

For example, a \Pull\ operation on a real variable consists in transferring data from a 
remote memory space 
to the local one. We abstracted it  by changing the local validity status. A \Pull\  
operation on an abstract variable only changes the validity status, no data transfer has 
to be done because abstract variables only need to be stored in one memory space. 
The validity status is stored in the CPU address space in VectorPU. Comparing the validity 
status of real memory and their representative  allow us to reason 
formally on the correctness of the validity tracking performed by VectorPU.

\begin{figure*}[tb]
\begin{mathpar}
\transl{R~\abs x}=(\IF{\isvalid~\abs x}{\noop}{(\Pull~x;\Pull~\abs x)})

\transl{\rem {R~\abs x}}=(\IF{\isremvalid~\abs x}{\noop}{(\Push~x;\Push~\abs x)})

\transl{RW~\abs x}=(\IF{\isvalid~\abs x}{\noop}{(\Pull~x;\Pull~\abs x)});w~\abs x

\transl{\rem{RW\,\abs x}}\!=\!(\IF{\isremvalid~\abs x}{\noop}{({\!\Push~x};\Push~\abs x)});\rem{w~\abs 
x}

\transl{W~\abs x}= w~\abs x

\transl{\rem{W~\abs x}}= \rem{w~\abs x}

\transl{\AM_1\{S_1\};\AM_2\{S_2\};\ldots} = \transl{\AM_1};S_1;\transl{\AM_2};S_2;\ldots
\end{mathpar}
\caption{Semantics of access modes and programs}\label{sem-AM}
\end{figure*}
As no data is accessed by the effects on abstract variables, they cannot create stuck configuration. Consequently, $r~\abs x$ has no effect as it does not 
change the validity 
status of variables. The statement that should get stuck in case of a read access is the 
read of \emph{the real variable that cannot access a valid data}. 

\paragraph{Semantics}

Figure~\ref{sem-AM} defines the semantics of programs with access modes as a translation 
into the core calculus of Section~\ref{sec-core}. 
This translation
ensures that the validity status is correct and records the effect of the function on the 
abstract variable before running the function call that may 
read and write data (on the real variables).  Similarly to the VectorPU library, the 
protected accesses can be 
considered as macros and the programs can be translated into the 
core syntax.

This encoding corresponds  to the macros as they are implemented in VectorPU.  
It is indeed easy 
to check that VectorPU tracks the effects in 
the same way as our effect system  does in the translation rules. These translation 
rules  perform \Push\ or 
\Pull\ operations in order to ensure that the memory is in a correct validity status for 
the read or write operation to be performed.
The validity conditions are checked on the abstract value, which corresponds to the fact that VectorPU only check the status of the coherency flag stored with the vector structure; push/pull operations are performed twice: once for representing the data-transfer, and once for representing the validity status update. Finally, status is updated when writing operations are declared.
When evaluating a program we create a store where the validity status of real and 
abstract variables are $(V,I)$, corresponding to the fact that data is 
initially placed in one memory location; typically, in VectorPU, in the CPU memory space.
%
%

\subsection{Well-declared Programs and their Properties}
We now define formally what it means for an access mode declaration to be correct, i.e. 
to adequately specify the effect of a function. The principle is that each operation on a 
memory location must be declared on its representative. It is however possible to declare 
more read or $RW$ accesses that what is done in practice, and one can declare a read 
and/or write 
access if only read or write is performed. Additionally, the annotation $W$ denotes an 
\emph{obligation} to write which allows the consistency mechanism to avoid any validity 
check and any transfer before running the function that will overwrite the data. 
To represent this concept, we need a first definition that states that an operation will be performed in all execution paths of a (bigger) statement. This 
definition 
formalises a classical static analysis concept that states that all branches of 
conditionals necessarily execute a given statement. It considers executions that run to 
completion and states that a given statement is necessarily evaluated in this execution.

\begin{definition}[Occurs in all execution paths]
We state that a statement\emph{ $S$ occurs in all execution paths of $S_0$} if, for any 
correct 
initial store $\sigma_0$, for all full reductions 
$(S_0,\sigma_0)\to(S_1,\sigma_1)\to\ldots\to(\noop,\sigma_n)$, there is an intermediate 
state $(S_i,\sigma_i)$ such that $S_i=S;S''$ for some $S''$.
\end{definition}
Notice that an operation $S$ can only appear in some of the execution paths of $S'$ if  $S\in 
S'$: if $S$ is an operation, i.e. a single statement, then $(S_0,\sigma_0)\to^* (S;S',\sigma)$ then $S\in S_0$.

\begin{definition}[Well-declared program]\label{def-WD}
A program $\Prog$ is \emph{well-declared} if for all $\AM\{S\}$ in $\Prog$ we have:
\begin{itemize}
\item $\Push\ x\not\in S$ and $\Pull\ x\not \in S$ (for any $x$),
\item $w\ x\in S \implies (W\ \abs x \in \AM \lor RW\ \abs x \in \AM)$,
\item $r\ x\in S \implies (R\ \abs x \in \AM \lor RW\ \abs x \in \AM)$,
\item $W\ \abs x\!\in\! \AM \!\implies\! w\ x$ occurs in all execution paths of $S$,
\item Plus the same rules for remote operations.
\end{itemize}
\end{definition}
Note that a well-declared program does not perform synchronisation 
operations (\Push\ or \Pull) manually, these operations are only  performed when 
evaluating the 
access mode declarations. Also each variable accessed by a well-declared function has an 
abstract representative in the corresponding declaration block.

A direct consequence of the definition above is that a well-declared program cannot 
access, in the same function, the same
variable in both address spaces. This is in accordance with VectorPU where each function 
is entirely executed either on a CPU or on a GPU, the formalisation is a bit more 
 generic on this aspect. This is expressed by the following property.
\begin{Property}[Localised access]\label{prop-localised}
Consider a well-declared program containing $\AM\{S\}$, for any $x$, we cannot have $\rem {E\ 
x} \in S$ and $E'\ x \in S$.
\end{Property}
\begin{proof}
This is a consequence of the uniqueness of abstract variables in access mode declarations. Indeed, if $E'\ x\in S$ and the program is well-declared, then there must be $R\ \abs x$ or $W\ \abs x$ or $RW\ \abs x$ in $\AM$. Similarly, as $\rem {E\ x} \in S$, if the program is well-declared $\rem {R\ \abs x}$ or $\rem {W\ \abs x}$ or $\rem {RW\ \abs x}$ in $\AM$, it is impossible to have two different entries for the same variable and thus we cannot have $\rem {E\ 
x} \in S$ and $E'\ x \in S$.
\end{proof}

\smallskip

We now state and  prove the two major properties ensured by our formalisation.
The first property ensures that the abstraction is correct relatively to the 
execution. This corresponds to the fact that VectorPU tracks adequately the validity 
status of the 
memory. This is expressed as a theorem that is similar to subject-reduction in type 
systems, it states that if the status of the abstract variables represent correctly the 
validity status of the real variables, then the 
abstraction is also correct after the execution of a  well-declared function.  We first define the correctness of the representation of the validity status.
\begin{definition}[Correct abstraction of the memory state]\label{CorrectAbstraction}
We have a \emph{correct abstraction of the memory state} if for each real memory 
location, the abstract representative of this location has a validity status that is an 
approximation, in the sense of $\leq$, of the validity status of the real memory. More formally, $\sigma$ stores a correct abstraction of the memory state if (recall $\abs x$ stores the validity information for $x$):
\[ \forall x\in \dom(\sigma).\, \sigma(\abs x)\leq\sigma(x) \]
\end{definition}
 The 
theorem below states that the execution of a well-declared function maintains the 
correctness of the memory state 
abstraction. 

\begin{Theorem}[Subject reduction]\label{thm-SR}
Suppose $\AM\{S\}$ is well-declared, we have:
\[\begin{array}{@{}l@{}}
\forall x\in \dom(\sigma).\, \sigma(\abs x)\leq\sigma(x) 
\land
 (\transl{\AM\{S\}},\sigma) \to^* (\noop,\sigma')
\\~~~~~
\implies \forall x\in \dom(\sigma').\, \sigma'(\abs x)\leq\sigma'(x) 
\end{array}\]

This property is extended by a trivial induction to the execution of a well-declared 
program $\Prog$ in an initial store $\sigma_0=(x\mapsto 
(V,I))^{x\in \vars(\Prog)}$.
\end{Theorem}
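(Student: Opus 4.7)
The plan is to exploit the structure $\transl{\AM\{S\}} = \transl{\AM}; S$: the prefix $\transl{\AM}$ is the only place where abstract variables are updated and where synchronisation (\Push/\Pull) occurs; the body $S$ only touches real variables and, by well-declaredness, only with effects consistent with $\AM$. I would first decompose $\AM$ according to the grammar into a conjunction of elementary modes on pairwise disjoint variables. Because the variables in different conjuncts are disjoint and each elementary translation affects only one variable pair $(x, \abs x)$, I can reduce the theorem to a per-variable analysis: establish that for every declared $x$, the pair $(\sigma'(x), \sigma'(\abs x))$ still satisfies $\sigma'(\abs x) \leq \sigma'(x)$, and for every variable not mentioned in $\AM$, neither $x$ nor $\abs x$ is modified (so the hypothesis transfers verbatim).

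For each variable $x$ that occurs in $\AM$, I would split the execution into the translation phase and the body phase. In the translation phase, I case-split on the elementary access mode:
\begin{itemize}
\item $R\ \abs x$: if $\isvalid~\abs x$ holds then by the initial hypothesis $\sigma(x)$ is locally valid; otherwise Property~\ref{prop:safe} forces $\sigma(\abs x) = (I,V)$, which by the hypothesis implies $\sigma(x)$ has its remote side valid, so both $\Pull~x$ and $\Pull~\abs x$ are unifiable and produce $(V,V)$.
\item $RW\ \abs x$: the same argument as $R$, followed by $w~\abs x$ setting $\sigma(\abs x) = (V,I)$, which is $\leq \sigma(x) = (V,?)$.
\item $W\ \abs x$: no pull is performed; $w~\abs x$ yields $\sigma(\abs x) = (V,I)$ without touching $x$.
\item The three remote modes are treated symmetrically, using \Push\ in place of \Pull.
\end{itemize}
For the body phase, I would use Property~\ref{prop-localised} to know that $x$ is accessed from only one side in $S$, and then argue by induction on the length of the reduction that the operations permitted by well-declaredness preserve $\sigma(\abs x) \leq \sigma(x)$. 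The key point is that $r~x$ does not alter validity, and $w~x$ moves $\sigma(x)$ to $(V,I)$, which dominates any possible value of $\sigma(\abs x) \in \{(V,I), (V,V)\}$ established by the translation.

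The hard part is the $W$ (and $\rem W$) case, because the translation optimistically sets $\sigma(\abs x) = (V,I)$ without ever making $x$ locally valid. If $S$ finished without writing $x$, then $\sigma'(x)$ could be $(I,V)$ while $\sigma'(\abs x) = (V,I)$, breaking the abstraction. This is exactly where the fourth clause of Definition~\ref{def-WD} enters: $W\ \abs x \in \AM$ forces $w~x$ to occur on every execution path of $S$. I would feed this into the induction by proving the stronger invariant that, for any intermediate configuration $(S_i,\sigma_i)$ reached during the execution of $S$, either $w~x$ has already fired (in which case $\sigma_i(x) = (V,I) \geq \sigma_i(\abs x)$) or $w~x$ still occurs on every remaining execution path, so that at $(\noop,\sigma')$ it must have fired; since the only other allowed operation on $x$ in this case is $w~x$ itself (no $r~x$ is permitted when only $W$ is declared), no intermediate effect can destroy local validity once established. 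A final minor point is noting that Property~\ref{prop:stuck} guarantees that the reduction in the hypothesis actually makes sense for every branch chosen, so no pathological path is missed.
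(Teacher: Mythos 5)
Your proof is correct and follows essentially the same route as the paper's: a per-variable case analysis on the declared access mode, splitting the reduction at the store reached just after the translation of $\AM$, with the fourth clause of Definition~\ref{def-WD} discharging the delicate $W$ case exactly as the paper does. One cosmetic remark: when $w~x$ fires, $\sigma(\abs x)$ is necessarily $(V,I)$ (the translation of $W$/$RW$ ends with $w~\abs x$, and the body never touches abstract variables), so the value $(V,V)$ you list as a possibility for $\sigma(\abs x)$ at that point cannot actually occur --- which is fortunate, since $(V,V)\not\leq(V,I)$.
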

\begin{proof}
Notice that $\dom(\sigma')=\dom(\sigma)$, and  if $\sigma(x)=(V,I)$ or $\sigma(x)=(I,V)$ 
then $\sigma(x)=\sigma(\abs x)$, else $\sigma(x)=(V,V)$.
We reason on the read and write access that occur in the considered reduction. Each 
variable $x$ is either read or written or not accessed (or read and written). For each 
case we compare the 
status of abstract and local variable, and in particular we consider the status of the 
reduction after executing the synchronisation code $\transl{\AM\{S\}}$ and call 
$\sigma_s$ the corresponding store (note that $\sigma_s(\abs x)=\sigma'(\abs x)$). We 
detail operations on the local 
address space, cases for remote operations are similar:

\noindent$\bullet$ If  $x$ is written, we have:
$(\transl{\AM\{S\}},\sigma)\to^* (w~x;S',\sigma'')  \to^* (\noop,\sigma')$. Whatever the 
initial value of $\sigma(x)$, we have $\sigma'(x)=(V,I)$. Two cases are possible:

\noindent$(1)$ $W~\abs x \in \AM$ then the value cannot be read and we have 
$\sigma_s(\abs x)=(V,I)$.  $ \sigma'(\abs x)=\sigma'(x)$.

\noindent$(2)$ $RW~\abs x \in \AM$ then a data-transfer (\Pull) may occur. Knowing that 
$\sigma(\abs x)\leq\sigma(x)$, by a  
case 
analysis on $\sigma(x)$ and $\sigma(\abs x)$ we have: $\sigma_s(\abs x)=(V,I)$ and 
$\sigma_s(x)=(V,I)$ or $(V,V)$. Whether $x$ is  read or not we have $ \sigma'(\abs 
x)=\sigma'(x)$.

\noindent$\bullet$ If  $x$ is read but not written, its validity status is 
not changed. 

\noindent$(1)$ $R~\abs x \in \AM$. By 
a  case 
analysis on $\sigma(x)$ and $\sigma(\abs x)$ we have:
$\sigma_s(x)\!=\!(V,I)$ and  $\sigma_s(\abs x)\!=\!(V,I)$, or $\sigma_s(x)\!=\!(V,V)$ 
and  
$\sigma_s(\abs x)\!=\!(V,I)$ or $(V,V)$.  Reading has no effect on validity status and in 
all 
cases we have $\sigma'(\abs x)~\leq~\sigma'(x)=\sigma_s(x)$.

\noindent$(2)$ $RW~\abs x \in \AM$ then similarly to the case (2) above we have 
$\sigma_s(\abs x)=(V,I)$, additionally $\sigma'(x)=\sigma_s(x)=(V,I)$ or $(V,V)$. In all 
cases $\sigma'(\abs x)\leq\sigma'(x)$.

\noindent$\bullet$ If  $x$ is not accessed but is in the declaration, the 
reasoning is the same as if it was only read. 
Note that a variable that is not accessed cannot be declared 
in write mode, $W~\abs x \in \AM$, by
Definition~\ref{def-WD}.
\end{proof}

Finally, a well-declared program always runs to completion: it never tries to access an 
invalid memory location.

\begin{Theorem}[Progress for well-declared programs]\label{thm-progress}
If a program $\Prog$ is well-declared, then its execution cannot reach a stuck 
configuration.
\end{Theorem}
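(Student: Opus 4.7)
The proof goes by induction on the prefix of the program reduced so far, using Theorem~\ref{thm-SR} as the key invariant: before each new block $\AM\{S\}$ is entered, the store satisfies $\sigma(\abs x) \leq \sigma(x)$ for every $x$. By Property~\ref{prop:stuck}, a configuration fails to reduce only because of the $\noop$ base case or because the validity status of some variable does not unify with the precondition of the effect about to fire. The plan is therefore to inspect every effect that can appear during the reduction of $\transl{\AM\{S\}}$ and verify that its precondition is met.

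For the base case, $\sigma_0$ maps every real and abstract variable to $(V, I)$, so the invariant holds trivially. For the inductive step, assume $\sigma$ satisfies the invariant and that we are about to execute $\transl{\AM\{S\}}$. I first argue that the synchronisation prologue $\transl{\AM}$ cannot get stuck. The only effects appearing in the prologue are $\Pull~x$, $\Push~x$, $\Pull~\abs x$, $\Push~\abs x$, and $w~\abs x$; the write has no precondition. A $\Pull$ is guarded by $\isvalid~\abs x$ and only executes when $\sigma(\abs x) = (I, Y)$. By Property~\ref{prop:safe}, $(I, I)$ is unreachable, so $Y = V$; by the invariant $\sigma(x) \in \{(I, V), (V, V)\}$, both of which unify with the $\Pull$ precondition $(X, V)$. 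The remote case with $\Push$ is symmetric.

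I then argue that the body $S$ cannot get stuck. After executing $\transl{\AM}$, for every $x$ declared with $R~\abs x$ or $RW~\abs x$ we have $\sigma(\abs x) = (V, \cdot)$, and the invariant, preserved up to this point by the analysis above, yields $\sigma(x) = (V, \cdot)$. Well-declaredness (Definition~\ref{def-WD}) forces each $r~x \in S$ to be covered by an $R$ or $RW$ declaration, so its precondition $(V, X)$ is satisfied; writes $w~x$ carry no precondition at all; and $S$ cannot contain manual $\Push$ or $\Pull$. Property~\ref{prop-localised} rules out mixing a local and a remote operation on the same variable in $S$, so the two sides are handled independently and the remote case uses the symmetric argument. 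Subsequent reads after a write on $x$ still see $\sigma(x) = (V, \cdot)$, so intertwined reads and writes in $S$ all succeed. Theorem~\ref{thm-SR} then re-establishes the invariant for the next block, closing the induction.

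The principal obstacle is that the invariant $\sigma(\abs x) \leq \sigma(x)$ can be temporarily broken \emph{inside} the translation: after $w~\abs x$ in $\transl{W~\abs x}$ or $\transl{RW~\abs x}$, we have $\sigma(\abs x) = (V, I)$ while $\sigma(x)$ might still be $(I, V)$ until the matching $w~x$ appears in $S$ (which is why the well-declaredness clause ``$w~x$ occurs in all execution paths of $S$'' is essential for Theorem~\ref{thm-SR}, though not for progress itself). The proof sidesteps this by only invoking the invariant to discharge the preconditions of the internal $\Pull$/$\Push$ operations, and by reading off the stronger postcondition $\sigma(x) = (V, \cdot)$ of $\transl{\AM}$ directly from the case analysis to discharge the preconditions of reads in $S$.
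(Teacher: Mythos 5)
Your proposal is correct and follows essentially the same route as the paper's proof: reduce progress to unification via Property~\ref{prop:stuck}, discharge the \Pull/\Push preconditions in the translated prologue using the invariant $\sigma(\abs x)\leq\sigma(x)$ from Theorem~\ref{thm-SR} together with Property~\ref{prop:safe}, and discharge the read preconditions in the body by showing the local status of every $R$/$RW$-declared variable stays $V$, using Property~\ref{prop-localised}. Your explicit handling of the temporarily broken invariant after $w~\abs x$ is a welcome clarification that the paper leaves implicit, but it does not change the argument.
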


\begin{proof}
 By 
Property~\ref{prop:stuck}, 
it is sufficient to prove that 
unification on the validity status is always possible. 
We consider a reduction  $(\transl{\AM\{S\}},\sigma) \to^* (S,\sigma_s) \to^* \ldots$ 
similarly  to the proof above.

By definition of well-declared 
programs and because of the signature of effects ($w~x$ cannot be stuck), only four
cases have to be analysed for the local operations:
 \begin{itemize}
\item \Pull\ operations (on $x$ and $\abs x$) in the translation of $R~\abs x$ or 
$RW~\abs x$. Unification 
requires that $\sigma(x)=(X,V)$ and  $\sigma(\abs x)=(Y,V)$.
\item $r~x$ operation in the evaluation of $S$. Unification 
requires that $\sigma'(x)=(V,X)$ where $\sigma'$ is the store in which the read access is 
to be evaluated.
\item $\Push~x$, $\Push~\abs x$, and $\rem{r~x}$ that are similar to the cases above, they require the symmetric validity status in similar conditions (not detailed below).
\end{itemize}
Indeed, access mode declarations do not generate reading operations, and by definition 
function statements contain no \Push\ or \Pull.

Concerning the first case, because of Theorem~\ref{thm-SR}, we have $\sigma(\abs x)\leq 
\sigma(x)$, and because of property~\ref{prop:safe} none of them is $(I,I)$. By case 
analysis on the possible values of $\sigma(\abs x)$ and $\sigma(x)$, it is easy to show 
that $\sigma(x)=(X,V)$ and  $\sigma(\abs x)=(Y,V)$ if we reach the two \Pull\ statements 
that perform data transfers before the execution of the function.

Concerning read access, they should be verified by an induction on the reduction steps 
following the state $(S,\sigma_s)$ showing that, for any variable $x$ that is declared $R$ 
or $RW$, in all states we have $\sigma'(x)=(V,X)$. Indeed, by the same analysis as in the 
proof of 
Theorem~\ref{thm-SR} we know that $\sigma'(x)=(V,X)$. Because of 
Property~\ref{prop-localised} no remote operation is possible on $x$ and thus only $r~x$ 
and $w~x$ operations are possible on $x$, both maintain the invariant $\sigma'(x)=(V,X)$ 
for some $X$.
%
\end{proof}
\begin{example}
Consider the example above of a variable written on the CPU, and then read on the GPU, 
a well-declared program encoding this behaviour would be:\\[2mm]
$\begin{array}{l}
RW~\abs x\{w~x\};\\
\rem{R~\abs x}\{\rem{r~x}\}
\end{array}$\\[1mm]
This code automatically generates the \Push\ instruction that prevents 
the program from being stuck, indeed the $RW$ annotation ensures that after executing the first line, the validity status pf $\abs x$ is: $(V,I)$, the encoding of  for $\rem{R~\abs x}$ checks whether the remote status is valid, as it is not the statements: $\Push~x;\Push~\abs x;$ are executed.
\end{example}

\subsection{Effects and Access Mode Declarations for Arrays}\label{sec-arrays}
In array structures, the 
validity status of the whole array is abstracted away by a single validity status pair. 
We extend the syntax for arrays as follows, $x[i]$ denotes the indexed access to an 
element of the array. More precisely the new operations on arrays and their elements are 
(we still have the previous operations on non-array and abstract variables):
\[S::= ... \,|\, r~x[i] \,|\, w~x[i] \]

Synchronisation operations (\Push\ and \Pull) exist for arrays but the whole array is 
synchronised, and we write $\Push~x$ and $\Pull~x$ as above.
All the elements of the array are represented by a single abstract variable: $\abs{x}$ 
represents a safe abstraction of the validity status of all $x[i]$. In other words, as soon as one element of the array $x$ is invalid locally (resp. remotely) the validity status of $\abs x$ can only be $(I,V)$ (resp. $(V,I)$).

The semantics of access mode declarations and programs is unchanged because 
synchronisation operations and access mode declarations do not concern array elements.
The concept of well-protected programs must be adapted to the case of array structures, 
and more 
precisely to the fact that several  memory locations are represented by a single abstract 
variable.

\begin{definition}[Well-declared program with array access]\label{def:well-declared-array}
A program $\Prog$ is \emph{well-declared} if for all $\AM\{S\}$ in $\Prog$, additionally 
to the rules of Definition~\ref{def-WD}, we have\footnote{$\range(x)$ returns the set of valid indices of the array $x$}:
\begin{itemize}
\item $w\ x[i]\in S \implies (W\ \abs x \in \AM \lor RW\ \abs x \in \AM)$,
\item $r\ x[i]\in S \implies (R\ \abs x \in \AM \lor RW\ \abs x \in \AM)$,
\item $W\ \abs x\!\in\! \AM \!\implies\! \forall i\!\!\in\!\!\range(x).\, w\ x[i]$ occurs in all execution 
paths of $S$,
\item Plus the same rules for remote operations.
\end{itemize}
\end{definition}

\begin{example}
Consider the function body $\rem {w~x[3]}$, corresponding to a function made of the C++ statement $x[3]=0$ executed on a GPU. The only safe access mode declaration for it is $\rem{RW~\abs x}$ indeed, some of the elements of the array are written but not all. At the end of the function execution the valid copy of the array is on the GPU.
\end{example}

The other definitions and properties are expressed similarly for arrays, compared to standard variables, and both \emph{subject-reduction}, 
Theorem~\ref{thm-SR}, and \emph{progress}, Theorem~\ref{thm-progress}, are 
still valid. The only change is the ``correct abstraction of the memory state'' criteria -- see Definition~\ref{CorrectAbstraction} --  
that becomes 
\[\forall x\in 
\vars(S).\begin{array}[t]{l}
\sigma(\abs x)\leq\sigma(x) \text{ if $x$ is not an array} \\
	\forall i\in\range(x).\, \sigma(\abs x)\leq\sigma(x[i]) \text{ if $x$ is an array}
\end{array}
\]

%
%
\begin{Theorem}\label{thm-correct-array}
If a program using arrays is well-declared according to Definition~\ref{def:well-declared-array} then its execution verifies both the subject-reduction and the progress property.
\end{Theorem}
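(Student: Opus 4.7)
The plan is to adapt the proofs of Theorem~\ref{thm-SR} and Theorem~\ref{thm-progress} to the array setting, exploiting the two key changes: the new correctness criterion quantifies over all valid indices via $\forall i\!\in\!\range(x).\,\sigma(\abs x)\leq\sigma(x[i])$, and the well-declaration rule for $W$ now requires $w~x[i]$ on every index $i\in\range(x)$ in all execution paths of $S$. The overall skeleton of the reasoning is unchanged; only the bookkeeping on array elements needs to be redone.

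First, I would establish a small lemma on the abstraction order: if $\sigma(\abs x)$ has $V$ in its first (resp. second) component, then every element $x[i]$ must also have $V$ in the same component, because $(V,Z)\leq(X,Y)$ forces $X=V$ (and symmetrically). This is the only structural fact about arrays one really needs. From it, I would redo the subject-reduction case analysis of Theorem~\ref{thm-SR} for an annotated call $\AM\{S\}$, splitting on the access mode of each declared variable $\abs x$ and tracking $\sigma(\abs x)$ and every $\sigma(x[i])$ through the translation $\transl{\AM}$ and through $S$. The cases for $R~\abs x$ and $\rem{R~\abs x}$ go through exactly as before, since reads do not change validity status and the optional \Pull/\Push\ on $x$ touches every element uniformly, restoring all $x[i]$ to $(V,V)$. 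The $RW$ case is also essentially unchanged: after the conditional synchronisation, $\sigma_s(\abs x)\!=\!(V,I)$ while each $\sigma_s(x[i])$ is either $(V,I)$ or $(V,V)$, so any subsequent writes on individual indices preserve $\sigma'(\abs x)\leq\sigma'(x[i])$.

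The only delicate case is $W~\abs x$. Here no synchronising transfer is emitted, so the initial status of individual elements $x[i]$ is not changed by $\transl{\AM}$ and may, in particular, include $(I,V)$ elements (consistent with $\sigma(\abs x)=(I,V)$). What saves us is precisely the strengthened well-declaration condition: every $i\in\range(x)$ is written along every execution path of $S$, so after reduction each $\sigma'(x[i])=(V,I)$, which matches $\sigma'(\abs x)=(V,I)$ produced by the $w~\abs x$ emitted in $\transl{W~\abs x}$. This is the step I expect to be the main obstacle to justify cleanly, since it is the only place where the extra per-index condition in Definition~\ref{def:well-declared-array} is actually used, and I would make it explicit by an induction on the derivation of ``$w~x[i]$ occurs in all execution paths of $S$'' for each $i$. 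Remote operations are handled symmetrically by swapping components, relying on Property~\ref{prop-localised} to rule out mixed local/remote accesses on the same variable inside a single annotated block.

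For the progress part (Theorem~\ref{thm-progress}) I would again mirror the original argument. The only operations that can get stuck are the \Pull\ (resp. \Push) emitted by $\transl{R~\abs x}$ or $\transl{RW~\abs x}$ on both $x$ and $\abs x$, and the element-level $r~x[i]$ (resp. $\rem{r~x[i]}$) inside $S$. For the first, the \Pull\ is only emitted when $\isvalid~\abs x$ is false, so $\sigma(\abs x)=(I,V)$ by Property~\ref{prop:safe} (safety excludes $(I,I)$); by the abstraction invariant just re-established, every $\sigma(x[i])$ has second component $V$, hence unification with $(X,V)\mapsto(V,V)$ succeeds on every element and the array-wide \Pull\ on $x$ is well defined. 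For element-level reads, the same invariant combined with the preceding synchronisation shows $\sigma'(x[i])=(V,X)$ at the point of access, as in the non-array proof. The analogous argument on the remote side concludes the theorem.
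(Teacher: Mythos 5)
Your proposal is correct and follows essentially the same route as the paper's (much terser) proof sketch: reuse the non-array arguments of Theorems~\ref{thm-SR} and~\ref{thm-progress}, with the only genuinely new work in the $W~\abs x$ case, where the strengthened per-index condition of Definition~\ref{def:well-declared-array} is what guarantees no element is left in state $(I,V)$ while $\abs x$ is set to $(V,I)$. Your explicit lemma on the abstraction order and the element-wise tracking through $\transl{\AM}$ merely make precise what the paper leaves implicit, so there is no substantive difference in approach.
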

\begin{proof}[Proof sketch]
 The proofs are similar to the non-array case except in the case of $W~x$ 
declarations where the fact that all elements of the array must be written is necessary 
to ensure that no element is in the status $(I,V)$ (which could not be safely represented 
by $(V,I)$) at the end of the function execution. If we focus on the proof of 
Theorem~\ref{thm-SR}, case ``$x$ is written, sub-case (1) we have $\sigma'(\abs x)=(V,I)$ 
which is a safe abstraction because \emph{all elements have been written}, and thus 
$\sigma'(x[i])=(V,I)$ for all $i$. If one element $j$ was not written, we could have had 
$\sigma'(x[i])=(I,V)$ which would invalidate the theorem. Overall, only arguments about correctness of the abstractions need to be adapted.
\end{proof}

\subsection{Discussion: Similarities and differences relatively to VectorPU}

Let us compare the formal definition of the coherence protocol, Figure~\ref{sem-AM} (valid for simple memory locations or arrays), with the VectorPU implementation of the protocol for simple arrays, Figure~\ref{fig:vectorpucoherence}. Except  the order of operations and minor changes, the code is similar. The main difference is that there is no view of abstract vs. concrete variables, however, if we consider that transfer operations on abstract variables have no effect, and that validity status of concrete variables can be abstracted away, the code is  the same as the formalisation. 

Taking a more global point of view, no verification is performed by the VectorPU framework and thus, the property of ``well-declared programs'' is not checked currently by the framework. In the current state of the library, the property ``well-declared programs''  must be ensured by the programmer. The implementation of VectorPU relies on the hypothesis that the function declarations are correct, because of this the current formalisation is a significant step forward as it allows us to express precisely what assumption is made by the library on the programmer's code.

Such a check could either be done  by runtime verification checking that each function performs exactly the required access, or  statically by constraints on the program and a static analysis (involving some approximations, meaning  some correct programs could be rejected). The first solution  is not acceptable considering the target application domain because of the overhead involved by the dynamic checks. 
Let us now try to figure out how difficult it would be to ensure that a given program verifies the ``well-declared'' property statically. We focus on the case of arrays that is the most interesting. Checking which accesses are performed on an array is a difficult task in general. However for reading access, declaring the  array of constant type could enforce a reading access and even allow us to infer the $R$ annotation for constant arrays. $RW$ is not constraining the access and is always a safe annotation. $W$ however requires all the elements to be written, checking this relies on a static analysis that is way beyond the scope of this paper and might be tricky. As already argued, the fact that the same variable does not appear twice is more  a restriction of the formalisation provided the same array is always declared with the same access ($RW$ in the worst case), an additional pass could modify the annotations of identical variables so that the least restrictive one is chosen. Possible aliasing between variables can be seen as a particular case of overlapping array, studied in the next section. 

Finally concerning expressiveness, the ``well-declared programs'' definition is a bit more flexible than what VectorPU targets at the moment because it is safe to declare in our framework functions that access some variables on the CPU and others on the GPU, and this is not planned in VectorPU, again due to the supported usage scenario.

These differences highlight interesting improvement directions for VectorPU while we can still consider that the current article  is a faithful formalisation of the library. In the next section, we investigate a feature that is not yet supported by VectorPU, but exists in SkePU. It concerns the handling of arrays that may overlap; this generalizes the problem of  aliasing between variables.
 We can somehow consider the theoretical results below as a specification of a future extension of the library.

\section{Overlapping arrays}\label{sec:overlap-array}

In the preceding section we supposed that arrays were well-separated. The preceding abstraction would also be valid for an array that would be split into disjoint entities (such as
VectorPU \texttt{pvector}s) and always used either as the disjoint sub-arrays or as the whole. In this section we extend the framework to take into account overlapping arrays. Here we still consider single dimension arrays for simplicity but multi-dimensional arrays could easily be taken into account.

\subsection{Context and Objectives}

In VectorPU, the first \texttt{pvector} on a vector passed as argument for access on device will (over)allocate space for the entire \texttt{vector}, all subsequent \texttt{pvector} accesses to the same \texttt{vector} can skip the allocation.
 Consequently, two successive data transfers of the same memory location will be written to the same memory location, even if the two initial locations are accessed through different (overlapping) \texttt{pvector}s.
Consequently, on the formal side, if several push or pull are performed on overlapping memory locations, the transferred memory has the same overlaps as the source. This is very important to ensure that no two copies of the same array element will coexist in the same location.

\paragraph{Problem statement}
Overlapping arrays raises several difficulties making the approach currently adopted in VectorPU not adapted. Indeed a single write operation can change the validity status of a cell that belongs to several arrays. Consequently, several access annotations may have to be written for a single operation. As a consequence, some annotations may never be correct: a function that is declared to write all the elements of an array $x$ will necessarily write some elements of the arrays that have an overlap with $x$, these overlapping arrays should thus be annotated and transmitted, or another coherence protocol should be used.

\paragraph{Approach}
In this work we take the decision not to change the coherence protocol of VectorPU and instead work at the access mode declaration level to ensure the consistency of overlapping arrays.

To take into account overlapping arrays in VectorPU, two approaches could be envisioned. A naive solution consists in applying the results for non-overlapping arrays. Indeed, Definition~\ref{def:well-declared-array} of well-declared programs with array accesses is still valid. However,  the programmer now has to annotate more variables because each array access operation may involve several arrays. Due to the array overlap, the programmer should now know all the arrays that are impacted by a function execution, including the overlapping arrays that are not passed as parameter, and the library should be extended to pass them as ``artificial'' parameters. To be more explicit, Definition~\ref{def:well-declared-array} should be extended as follows (with a symmetrical rule for remote writing):

\begin{quote}
For every $y$ overlapping $x$, if $i$ is in the range in common between $x$ and $y$, we have $w\ x[i]\in S \implies (W\ \abs y \in \AM \lor RW\ \abs y \in \AM)$.
\end{quote}

First note that no additional rule is necessary for read operations. Indeed, read operations use the validity status but do not modify it, consequently, read operations do not modify the validity status of other arrays and have no consequence on overlapping arrays.
Note also that the above rule restricts a bit the expressible effects as, for example, $W\ \abs x ; \rem{R\ \abs y}$ cannot be valid if $x$ and $y$ overlap. Indeed, $W\ \abs x$ means that all the cells of $x$ are written and thus the access mode for $\abs y$ must be $W$ or $RW$. In this case, this is indeed a safe restriction as the array $x$ will be written, and reading $y$ remotely might not access a valid value.

%

\subsection{Access mode inference for overlapping arrays}\label{sec:infer-overlap}

Though very precise, the approach described above does not seem realistic and we additionally develop an inference mechanism for access mode declarations in presence of overlapping arrays.
The objective  is to infer the correct annotations on  variables that are not passed as parameters. Knowing the access modes for the function parameters, we infer what operation must be done on other intersecting arrays to ensure the coherency of the system, and we express these additional operations as implicit generated annotations. These additional access mode declarations are inferred, the semantics of these added declarations will result in additional data transfers and validation/invalidation operations that make the program correct. This is less precise as it takes a pessimistic approach on the operations performed by the declared arrays. For example, for any array that is declared $RW$ we will suppose that all the array elements may be read and written, but the approach is safe and mostly automatic.

Starting from a given set of access mode annotations, we want to infer other access modes that are consequences of the overlaps and the existing annotations.
Because we are only aware of an approximation of the effects (for each array variable, effect is abstracted by a single global effect), the inferred accesses will be approximate but can be a safe over-approximation of the effect of the function.
 Without knowing the real accesses performed by the function, we deduce from the declared access modes, a set of additional ``artificial accesses''.

We consider $\AM$ the set of all access modes declared for a given function and extend it so that the function satisfies the well-declared program requirement even with overlapping arrays.

\begin{table}[!tb]
\begin{mathpar}
\inferrule
{E\ \abs x \in\AM}
{E\ \abs x \in\Overlap \AM}

\inferrule
{\rem{E\ \abs x} \in\AM}
{\rem{E\ \abs x} \in\Overlap \AM}

\inferrule
{W\ \abs x \in\AM \\ W\ \abs y \not\in\AM \\ x \text{ and } y \text{ overlap}}
{RW\  \abs y \in\Overlap \AM}

\inferrule
{RW\ \abs x \in\AM \\ x \text{ and } y \text{ overlap}}
{RW\  \abs y \in\Overlap \AM}

\inferrule
{\rem{W\ \abs x} \in\AM \\ \rem{W\ \abs y} \not\in\AM \\ x \text{ and } y \text{ overlap}}
{\rem{RW\  \abs y} \in\Overlap \AM}

\inferrule
{\rem{RW\ \abs x} \in\AM \\ x \text{ and } y \text{ overlap}}
{\rem{RW\  \abs y} \in\Overlap \AM}

\end{mathpar}
\caption{Extension of access mode annotations to deal with overlapping arrays}\label{Overlap-ext-tab}
\end{table}
\begin{example}
To understand the principle of the approach, consider the case where 
$W\ \abs x\in \AM$ then $ \forall i\in\range(x). w\ x[i]$ occurs in all execution 
paths of $S$, and thus for all arrays $y$ overlapping $x$  we must have $(W\ \abs y \in \AM \lor RW\ \abs y \in \AM)$. If we have no additional information we will ensure that $RW\ \abs y$ is also in the set of access mode declarations, which is always safe.
The extension $\Overlap \AM$ defined below not only contains the original annotations $\AM$ but also the annotations required for coherency of overlapping array. Here if $W\ \abs x\in \AM$ then both $W\ \abs x$ and $RW\ \abs y$ are in  $\Overlap \AM$.
\end{example}

\begin{definition}[Extension of access mode declarations for overlapping arrays]\label{def-overlap-annotation}
Consider a set $\AM$ of access mode declarations.
 The extension for overlapping arrays of $\AM$  is the smallest set $\Overlap \AM$ defined by the  rules in Table~\ref{Overlap-ext-tab}.

\end{definition}

The following theorem states that if the set of function parameters is extended according to the preceding definition, then memory consistency is ensured. Note that it means that a set of artificial parameters are to be added to some functions, in the sense that  data-transfers and validity status modifications  have to be performed on vectors that are not among the original parameters of the function.

\begin{Theorem}\label{thm-correct-array-overlap}
Consider a program that is well-declared according to Definition~\ref{def:well-declared-array}, not taking into account  overlapping arrays. Suppose its access mode declarations are extended according to Definition~\ref{def-overlap-annotation} then the execution of the obtained program verifies both subject reduction and progress, even in presence of overlapping arrays.
\end{Theorem}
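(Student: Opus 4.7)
The plan is to reduce the result to Theorem~\ref{thm-correct-array} by showing that $\Overlap\AM$ satisfies an enriched well-declaredness criterion that accounts for overlapping arrays, and then to adapt the proofs of Theorems~\ref{thm-SR} and~\ref{thm-progress} so that the abstraction invariant $\sigma(\abs y)\leq\sigma(y[i])$ is preserved at shared memory cells. First, by case analysis on the rules of Table~\ref{Overlap-ext-tab}, I would check that whenever $w\ x[i]\in S$ with $i$ lying in an overlap with some array $y$, the extension $\Overlap\AM$ contains $W\ \abs y$ or $RW\ \abs y$: the original well-declaredness already forces $W\ \abs x\in\AM$ or $RW\ \abs x\in\AM$, and in each subcase one of the inference rules inserts the required declaration for $y$. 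The same argument applies symmetrically to remote writes, so $\Overlap\AM$ meets the extended well-declaredness criterion stated just before Definition~\ref{def-overlap-annotation}.

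I would then re-run the case analysis in the subject-reduction proof (Theorem~\ref{thm-SR}, as adapted to arrays in Theorem~\ref{thm-correct-array}) for an annotated block using $\Overlap\AM$, paying particular attention to cells in an overlap. The new ingredient is that a physical write $w\ x[i]$ now also sets $y[i]$ to $(V,I)$ for every $y$ containing index $i$, while the translation of the added $RW\ \abs y$ pulls $y$ if necessary before $S$ and sets $\sigma(\abs y)=(V,I)$ afterwards. The central computation is that after the call every $y[i]$ is locally valid: either because it lies in $x\cap y$ and was overwritten, or because the pull on $y$ restored its local value and nothing subsequently invalidated it. Thus $(V,I)\leq\sigma'(y[i])$ and the invariant is preserved; the $RW\ \abs x$ case is handled identically via the fourth rule of Table~\ref{Overlap-ext-tab}.

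For progress, I would re-examine the unification conditions at the $\Pull$ and $\Push$ operations generated by the translations in $\Overlap\AM$. Here I would use the physical-overlap hypothesis recalled at the beginning of Section~\ref{sec:overlap-array}, namely that push and pull operations respect overlaps: a shared cell has the same physical validity whether accessed via $x$ or via $y$, so the abstraction invariant cannot simultaneously demand contradictory transfers on $x$ and $y$, and each triggered transfer unifies as in the original argument. The main obstacle is the cells of $y$ \emph{outside} the overlap with $x$: they are only touched by the synthesised pull/push on $y$, and one must check that this transfer restores uniform validity over the whole of $y$ even when $\sigma(\abs y)$ was a strictly coarser approximation than the individual $\sigma(y[i])$ before the call. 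This amounts to combining the initial abstraction invariant on both $\abs x$ and $\abs y$ with the overlap-preservation property of transfers, so that no per-cell information is lost when we project back to the abstract level.
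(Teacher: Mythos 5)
Your proposal takes essentially the same route as the paper's proof: both reduce the statement to Theorem~\ref{thm-correct-array} by checking, through a case analysis on the rules of Table~\ref{Overlap-ext-tab}, that $\Overlap{\AM}$ satisfies the well-declaredness criterion extended with the overlap rule (every shared index that is written forces a $W$ or $RW$ declaration on each overlapping array), with the symmetric argument for remote writes. Your second and third paragraphs spell out per-cell details of the subject-reduction and progress arguments at shared indices that the paper's sketch leaves implicit under ``Theorem~\ref{thm-correct-array} will be sufficient to conclude,'' but this is elaboration of the same argument rather than a different approach.
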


\begin{proof}[Proof sketch]
The principle of the proof is to prove that, provided a set $\AM$ correctly declares the accesses performed syntactically by a function on its parameters, the set $\Overlap \AM$ is a correct approximation of the accesses performed by the function on all the arrays of the program, i.e. the parameter arrays and the arrays that overlap the parameter arrays. Then Theorem~\ref{thm-correct-array} will be  sufficient to conclude.

Trivially, the first two rules of Table~\ref{Overlap-ext-tab} are sufficient to conclude about normal function parameters. We now need to ensure that operations on overlapping arrays are well-declared. We focus on non-remote operations and prove that (indirect) operations on overlapping arrays are well-declared, according to Definition~\ref{def:well-declared-array} modified by the additional rule introduced in above: 
\begin{quote}
For every $y$ overlapping $x$, if $i$ is in the range in common between $x$ and $y$, we have $w\ x[i]\in S \implies (W\ \abs y \in \AM \lor RW\ \abs y \in \AM)$.
\end{quote}
By a simple case analysis on the possible annotations and the possible operations performed on the arrays, we deduce that the access modes added by the four last rules of Table~\ref{Overlap-ext-tab} are sufficient to ensure the statement above. Finally the symmetrical statement for remote writing is ensured in the same way.
\end{proof}

\subsection{Towards an implementation in VectorPU}

To implement the proposed mechanism that ensure the safety of overlapping vector accesses
in a function call $f$, we need to add to VectorPU the two following components:
\begin{itemize}
    \item A representation $r_v$  to retrieve, for any given \texttt{pvector} $pv$
             of a \texttt{vector} $v$, the set of all other valid \texttt{pvector}s of $v$ 
             that overlap with $pv$.
             $r_v$ is initialized as empty when declaring a new \texttt{vector} $v$,
             queried and/or updated at \texttt{pvector} creations, deletions, and at calls,
             and is removed when $v$ is deallocated.
    \item A mechanism which intercepts the function call $f$ and,
          for every vector operand (\texttt{vector} or \texttt{pvector}) $pv$ 
          accessed as W or RW in $f$,
          looks up in the corresponding representation $r_v$ all  
          \texttt{pvector}s overlapping with $pv$. 
          For other arguments in $f$ that overlap with $pv$ and have access mode
          R or W, their access mode
          is updated to RW, as proposed in Table~\ref{Overlap-ext-tab}.
          For any other existing \texttt{pvector}s $w$ of $v$ not accessed in $f$
          (but possibly in earlier and/or later calls) 
          that do overlap with $pv$, 
          we append shadow arguments $RW(w)$ to $f$ as suggested by Table~\ref{Overlap-ext-tab}.
          Finally, the intercept mechanism performs, as before, the resulting coherence
          actions (data transfers, status updates) and delivers the call.
          For intercepting the function call, the function call operator is overloaded.
\end{itemize}

\begin{example}
As a simple example, let us consider the following set of \texttt{pvector}s and call sequence:

\begin{verbatim}
v = new vector(10, ...);
pv1 = new pvector( v, [2:5] );
pv2 = new pvector( v, [4:8] );
pv3 = new pvector( v, [7:9] );
pv4 = new pvector( v, [2:3] );
...
f1( ... R(pv1), R(pv2), ... );
f2( ... W(pv3) ... );  
f3( ... RW(pv4), R(pv2), ... );
\end{verbatim}

Intercepting the function calls, we maintain $r_v$ and update the calls as 
described above.
For the call to \verb.f2., we infer from W(pv3) and the overlap of \verb.pv3. with \verb.pv2.
by Table~\ref{Overlap-ext-tab} that the access mode of \verb.pv2. (not accessed in \verb.f2.) must be 
upgraded to RW, which we do by conceptually appending \verb.RW(pv2).
as a shadow argument to \verb.f2.. The call to \verb.f2. is thus
conceptually rewritten\footnote{As all overlapping
\texttt{pvectors} had been identified before the rewriting, the rule needs
not be applied recursively to the appended shadow arguments, here \texttt{RW(pv2)}.} into

\begin{verbatim}
f2( ... W(pv3) ... , RW(pv2) ); 
\end{verbatim}

\noindent
hence we make sure that the access to \verb.pv2. in the subsequent
call to \verb.f3. will be handled correctly.
\end{example} 

It remains to select an appropriate data structure for $r_v$ that allows for
efficient dynamic insertion and removal of \texttt{pvector}s of a vector $v$, 
i.e., index intervals, and efficient lookup of all pvectors that overlap
with a given query interval.
For very small numbers of \texttt{pvector}s of a vector $v$, a simple unordered
list of \texttt{pvector}s is sufficient; this is used e.g.\ in the
smart-container coherence management in SkePU \cite{Dastgeer-IJPP15}. 
For scaling up to larger numbers of \texttt{pvector}s,
a \emph{segment tree} \cite[Sec.~10.3]{Overmars} 
could be used. A segment tree storing $n$ intervals
can be updated dynamically (insertion, removal) in time  $O(\log n)$ and
can retrieve the set of all $k$ intervals overlapping with a query interval
in time $O(k+\log n)$; the space requirements is $O(n\log n)$.

The time and space required to handle overlapping arrays at runtime could be saved with a precise enough static analysis that would infer potential overlapping between arrays; choosing a precise enough or static analysis designing a dedicated one is outside the scope of this paper.

%

\section{A few related works}\label{sec:RW} 
Most of the verification works related to memory consistency focus on coherence 
protocols 
and/or 
weak memory models~\cite{pong1997verification}. 
Among them, one could cite works on a lazy caching algorithm~\cite{Gerth1999}, a formal specification of a caching 
algorithm, and its verification in TLA~\cite{Ladkin1999}. These works shows the 
difficulty to reason on memory coherency, but also that specifications in these models 
should rely on a few simple instructions on the type of memory accessed, a bit similarly 
to this proposal.
Coherence protocols have also been verified using CCS specifications~\cite{Barrio01}. 
These various works are quite different from the approach presented in this paper because 
we rely here on a declarative approach for memory accesses: the programmer declares the 
kind of memory accesses performed by a component, and the consistency mechanism ensures 
that each component accesses a valid memory space.

More recently, and adopting a more language oriented approach, Crary and 
Sullivan~\cite{Crary:POPL:2015} designed 
a calculus for expressing ordering of memory accesses in weak memory models. We 
are interested here in a much simpler problem where memory access is  sequential 
and clearly identified but the objective is to prove the correctness of simple cache-coherency operations. 
Even an  extension of this work for parallel processes would  result in a 
simpler model than the ones that exist for weak memory models because of the explicit 
consistency points introduced in the execution by the start/end of each function. 

The closest work to ours is probably~\cite{BJPTSAC16} that defines a memory access calculus similar to ours and proves the 
correctness of a generic cache coherence protocol expressed as part of the semantics of 
the calculus. Compared to this work, we are interested in explicit statements on memory 
accesses and thus the cache consistency is partially ensured  by the programmer 
annotations, making the approach and the properties proven significantly different. Some 
aspects of the approaches could have been made more similar, e.g. by extending our 
work to more than two address spaces or adopting a different  syntax. However our problem 
and formalisation are quite simpler, and we 
believe easier to read, while sufficient for our study.
The same authors also designed a formal model written in Maude~\cite{BJPTMaude16} to 
better understand the possible 
optimisations and the impact of the memory organisation on performance in the context of 
cache coherent multicore architectures. This could be an interesting starting point for 
future works, especially if we 
extend our work to better model the performance aspects of VectorPU and want to reason 
formally on the improved performance obtained by the library. Also from the same authors~\cite{Bijo2017} extends the results described above with parallel spawned task and could be a source of inspiration  to extend our work towards parallel function execution.

\section{Conclusion and future works}\label{sec:conclusion}
In this article we provided a formal approach to verify the consistency of the memory 
accesses in heterogeneous computer systems made of two memory spaces. We formalise the 
operations of memory accesses and memory synchronisation between the two memory spaces 
and prove that a program adequately annotated with informations on the memory accesses 
always access valid memory spaces and tracks correctly which of the memory space contains 
the up-to-date data.

The practical result is that we can verify the coherency mechanism used by the VectorPU 
library and ensure that, additionally to the significant performance benefits of the 
approach, the VectorPU mechanisms is correct and ensures the consistency of the memory 
accesses.

We also extended our model for studying the effect of
operations made on overlapping arrays.
The current implementation of VectorPU supposes that the 
(\texttt{pvector}) array operands always
represent disjoint memory locations, it does not 
take into account overlapping arrays. 
Based on the solution developed in our model,
we described an extension of the VectorPU library that could 
deal safely with overlapping array accesses by
overlapping \texttt{pvector} arguments.

We envision several extensions to this work.
The current article only deals with two memory spaces; the extension to many 
memory spaces (as supported e.g.\ in SkePU)
seems relatively simple but  the mechanism dealing 
with memory transfers between several memory locations becomes a bit more complex; its 
formalisation should be similar.

Moreover, we are interested in the application of our approach to the 
verification of other frameworks. 
Indeed VectorPU uses the most primitive
cache coherence protocol, the VI-protocol.
 More elaborated coherence protocols like MSI or MESI 
 (as used e.g.\ in SkePU \cite{Dastgeer-IJPP15}) 
 introduce additional states where the
 number of readers has to be tracked for example. 
 Also, SkePU uses a more space-efficient management of
 partial vector accesses, the
 coherence protocol itself involves explicit intersection tests with
 existing copies. 
 Verifying such framework would require 
 a modification of our abstract state representation and a modification of the access 
 mode translational semantics.
 
\section*{Acknowledgments}   

C.\ Kessler acknowledges partial funding by
EU H2020 project EXA2PRO (801015).

\section*{References}
\bibliographystyle{elsarticle-num} 
\bibliography{bibliography}

\begin{thebibliography}{10}
\expandafter\ifx\csname url\endcsname\relax
  \def\url#1{\texttt{#1}}\fi
\expandafter\ifx\csname urlprefix\endcsname\relax\def\urlprefix{URL }\fi
\expandafter\ifx\csname href\endcsname\relax
  \def\href#1#2{#2} \def\path#1{#1}\fi

\bibitem{Dastgeer-IJPP15}
U.~Dastgeer, C.~Kessler, Smart containers and skeleton programming for
  {GPU}-based systems, Int.\ J.\ of Par.\ Progr. 44~(3) (2016) 506--530.
\newblock \href {http://dx.doi.org/10.1007/s10766-015-0357-6}
  {\path{doi:10.1007/s10766-015-0357-6}}.

\bibitem{StarPU}
C.~Augonnet, S.~Thibault, R.~Namyst, P.-A. Wacrenier,
  \href{http://hal.inria.fr/inria-00550877}{{StarPU}: A unified platform for
  task scheduling on heterogeneous multicore architectures}, Concurrency and
  Computation: Practice and Experience 23 (2011) 187--198.
\newblock \href {http://dx.doi.org/10.1002/cpe.1631}
  {\path{doi:10.1002/cpe.1631}}.
\newline\urlprefix\url{http://hal.inria.fr/inria-00550877}

\bibitem{Enmyren10}
J.~Enmyren, C.~W. Kessler, {SkePU}: A multi-backend skeleton programming
  library for multi-{GPU} systems, in: Proc.\ 4th Int. Workshop on High-Level
  Parallel Programming and Applications (HLPP-2010), Baltimore, Maryland, USA,
  ACM, 2010, pp. 5--14, doi: 10.1145/1863482.1863487.

\bibitem{VectorPU-2017}
L.~Li, C.~Kessler, {VectorPU}: A generic and efficient data-container and
  component model for transparent data transfer on {GPU}-based heterogeneous
  systems, in: Proc. PARMA-DITAM'17, ACM., ACM, 2017, pp. 7--12.
\newblock \href {http://dx.doi.org/10.1145/3029580.3029582}
  {\path{doi:10.1145/3029580.3029582}}.

\bibitem{HKL-4PAD2018}
L.~Henrio, C.~Kessler, L.~Li, Ensuring memory consistency in heterogeneous
  systems based on access mode declarations, in: F.~Loulergue, J.-M. Couvreur
  (Eds.), 5th International Symposium on Formal Approaches to Parallel and
  Distributed Systems, as part of The 16th International Conference on High
  Performance Computing \& Simulation (HPCS 2018), {IEEE}, 2018.

\bibitem{Nielson1999}
F.~Nielson, H.~R. Nielson, C.~Hankin, Type and Effect Systems, Springer Berlin
  Heidelberg, 1999, pp. 283--363.
\newblock \href {http://dx.doi.org/10.1007/978-3-662-03811-6\_5}
  {\path{doi:10.1007/978-3-662-03811-6\_5}}.

\bibitem{Overmars}
M.~{de Berg}, M.~{van Kreveld}, M.~Overmars, O.~Schwarzkopf, Computational
  Geometry, Second Revised Edition, Springer, 2000.

\bibitem{pong1997verification}
F.~Pong, M.~Dubois, Verification techniques for cache coherence protocols, ACM
  Computing Surveys (CSUR) 29~(1) (1997) 82--126.

\bibitem{Gerth1999}
R.~Gerth, Sequential consistency and the lazy caching algorithm, Distributed
  Computing 12~(2) (1999) 57--59.
\newblock \href {http://dx.doi.org/10.1007/s004460050057}
  {\path{doi:10.1007/s004460050057}}.

\bibitem{Ladkin1999}
P.~Ladkin, L.~Lamport, B.~Olivier, D.~Roegel, Lazy caching in {TLA},
  Distributed Computing 12~(2) (1999) 151--174.
\newblock \href {http://dx.doi.org/10.1007/s004460050063}
  {\path{doi:10.1007/s004460050063}}.

\bibitem{Barrio01}
M.~Barrio-Sol{\'o}rzano, M.~Encarnaci{\'o}n~Beato, C.~E. Cuesta, P.~de~la
  Fuente, Formal verification of coherence for a shared memory multiprocessor
  model, in: V.~Malyshkin (Ed.), Parallel Computing Technologies, Springer
  Berlin Heidelberg, 2001, pp. 17--26.

\bibitem{Crary:POPL:2015}
K.~Crary, M.~J. Sullivan, A calculus for relaxed memory, in: Proceedings of the
  42nd Annual ACM SIGPLAN-SIGACT Symposium on Principles of Programming
  Languages, POPL '15, ACM, New York, NY, USA, 2015, pp. 623--636.
\newblock \href {http://dx.doi.org/10.1145/2676726.2676984}
  {\path{doi:10.1145/2676726.2676984}}.

\bibitem{BJPTSAC16}
S.~Bijo, E.~B. Johnsen, K.~I. Pun, S.~L.~T. Tarifa, An operational semantics of
  cache coherent multicore architectures, in: Proceedings of the 31st Annual
  ACM Symposium on Applied Computing, SAC '16, ACM, New York, NY, USA, 2016,
  pp. 1219--1224.
\newblock \href {http://dx.doi.org/10.1145/2851613.2851718}
  {\path{doi:10.1145/2851613.2851718}}.

\bibitem{BJPTMaude16}
S.~Bijo, E.~B. Johnsen, K.~I. Pun, S.~L. Tapia~Tarifa, A maude framework for
  cache coherent multicore architectures, in: D.~Lucanu (Ed.), Rewriting Logic
  and Its Applications, Springer International Publishing, Cham, 2016.

\bibitem{Bijo2017}
S.~Bijo, E.~B. Johnsen, K.~I. Pun, S.~L. Tapia~Tarifa, A formal model of
  parallel execution on multicore architectures with multilevel caches, in:
  J.~Proen{\c{c}}a, M.~Lumpe (Eds.), Formal Aspects of Component Software,
  Springer International Publishing, Cham, 2017, pp. 58--77.

\end{thebibliography}

\end{document}